\newtheorem{theorem}{Theorem}
\newcommand{\name}{RepoMark\xspace}
\newcommand{\revise}[1]{{\color{black}#1}}
\begin{document}

%%
%% The "title" command has an optional parameter,
%% allowing the author to define a "short title" to be used in page headers.
\title{RepoMark: A Data-Usage Auditing Framework for Code Large Language Models}

\author{
\IEEEauthorblockN{
Wenjie Qu$^1$,
Yuguang Zhou$^1$,
Bo Wang$^1$,
Yuexin Li$^1$,
Lionel Z. Wang$^2$,
Jinyuan Jia$^3$, Jiaheng Zhang$^1$}
\IEEEauthorblockA{$^1$ National University of Singapore, $^2$ Nanyang Technological University, $^3$ Penn State University}
}

\maketitle
\begin{abstract}
The rapid development of Large Language Models (LLMs) for code generation has transformed software development by automating coding tasks with unprecedented efficiency. 
 However, the training of these models on open-source code repositories (e.g., from GitHub) raises critical ethical and legal concerns, particularly regarding data authorization and open-source license compliance. Developers are increasingly questioning whether model trainers have obtained proper authorization to use repositories for training, especially given the lack of transparency in data collection.

To address these concerns, we propose a new data marking framework \name to audit the data usage of code LLMs. Our method enables auditors to verify whether their code has been used in training, while ensuring semantic preservation, imperceptibility, and a theoretical guarantee on false detection rate (FDR). By generating multiple semantically equivalent code variants, \name introduces data marks into the code files, and during detection, \name leverages a new ranking-based hypothesis test to detect model behavior difference on trained data. Compared to prior data auditing approaches, \name significantly enhances data efficiency, allowing effective auditing even when the user's repository possesses only a small number of code files.

Experiments demonstrate that RepoMark achieves a detection success rate over 90\% on small code repositories under a strict FDR guarantee of 5\%. This represents a significant advancement over existing data marking techniques, all of which only achieve accuracy below 55\% under identical settings. These results validate RepoMark as a robust, theoretically sound, and promising solution for enhancing transparency in code LLM training, which can safeguard the rights of code authors.

\end{abstract}

%%
%% The code below is generated by the tool at http://dl.acm.org/ccs.cfm.
%% Please copy and paste the code instead of the example below.
%%
%\begin{CCSXML}
%<ccs2012>
%<concept>
%<concept_id>10002978</concept_id>
%<concept_desc>Security and privacy</concept_desc>
%<concept_significance>500</concept_significance>
%</concept>
%<concept>
%<concept_id>10010147.10010257</concept_id>
%<concept_desc>Computing methodologies~Machine %learning</concept_desc>
%<concept_significance>500</concept_significance>
%</concept>
%</ccs2012>
%\end{CCSXML}

%\ccsdesc[500]{Security and privacy}
%\ccsdesc[500]{Computing methodologies~Machine learning}
%\ccsdesc[300]{Do Not Use This Code~Generate the Correct Terms for Your Paper}
%\ccsdesc{Do Not Use This Code~Generate the Correct Terms for Your Paper}
%\ccsdesc[100]{Do Not Use This Code~Generate the Correct Terms for Your Paper}

%%
%% Keywords. The author(s) should pick words that accurately describe
%% the work being presented. Separate the keywords with commas.
%\keywords{Data auditing, Code large language model}
%%
%% Keywords. The author(s) should pick words that accurately describe
%% the work being presented. Separate the keywords with commas.
%

% \received{N.A.}
% \received[revised]{N.A.}
% \received[accepted]{N.A.}

%%
%% This command processes the author and affiliation and title
%% information and builds the first part of the formatted document.

\section{Introduction}
\label{sec:intro}
The rapid advancement of Large Language Models (LLMs) in code generation has revolutionized software development, enabling developers to automate coding tasks with unprecedented efficiency. Open-source communities such as GitHub, which maintain a large set of code repositories, have become the main source of  training data for code LLMs~\cite{code_llm_2,copilot,code_llm_3}.  However, the use of publicly available code repositories to train these models has raised critical ethical and legal questions.

A primary concern is data authorization—specifically, whether model developers have obtained proper consent from code authors to use their repositories for training purposes. While source code itself may not always constitute personal data, developers' rights regarding the usage and distribution of their code align with broader data protection principles established by regulations such as the General Data Protection Regulation (GDPR)~\cite{GDPR} in Europe, which grant data owners the right to know how their data is used. Nevertheless, the training of machine learning models often suffers from a pervasive lack of transparency. Model trainers rarely disclose the details of the origin of their training data~\cite{generalccs24}, and the opaque data collection processes employed in training code LLMs make it difficult to audit whether proper authorization was obtained from code authors for model training. This lack of transparency is especially concerning given the emergence of commercial products such as GitHub Copilot~\cite{copilot_commerce}, Amazon CodeWhisperer~\cite{code_llm_2}, Cursor~\cite{code_llm_3}, Tabnine~\cite{code_llm_4}, and Codeium~\cite{code_llm_5}. 
Exploiting source code without authorization to train models for commercial purposes directly violates the rights and contributions of original authors and undermines the spirit of the open-source community.

Given these ethical and legal concerns, tracing the data-usage of code LLMs has become a pressing challenge. The most prevalent approach to address this challenge is data-usage auditing~\cite{generalccs24}, which enables auditors to verify whether their data has been used to train a machine learning (ML) model. Existing data-usage auditing methods can be categorized into two classes: membership inference and data marking. Membership inference~\cite{mia1,mia3,mia4} infers if a data sample is a member of an ML model’s training set without modifying the training data.  
    In contrast, data marking~\cite{radio,backdoor_wm_1,generalccs24,datamark_llm1} modifies the data samples prior to publication, enabling detection methods to exploit statistical signals introduced during marking to detect data-usage in training.  
In general, data marking leverages more information and   often achieves higher detection accuracy~\cite{generalccs24} than membership inference. Therefore, we design an auditing method for code LLMs based on the data marking paradigm in this paper.

 To effectively audit code LLMs, our auditing method should satisfy four properties: (1) \emph{Preservation of Semantics}: The modifications introduced by the marking method must retain the original code's semantics. (2) \emph{Imperceptibility}: The code marks should be difficult for model trainers to identify, as they are incentivized to remove the marked code files to avoid being caught. (3) \emph{Data Efficiency}: In the real world, many individual repositories only have 10--50 code files. The method should maintain sufficient accuracy in detecting model training on such small-scale repositories. (4) \emph{FDR Guarantee}: 
    The method should provide a guarantee on the false detection rate (FDR). FDR is defined as the probability of incorrectly flagging a target model as having been trained on a repository when no such training occurred. A theoretical FDR guarantee is imperative, as false accusations against the model trainer could lead to significant reputational harm and ethical breaches.

However, no existing data marking scheme satisfies all four properties simultaneously.
First, many methods—e.g., CodeMark~\cite{sun2023codemark} and Huang et al.~\cite{generalccs24} (CCS’24)—are not sufficiently data efficient and deliver suboptimal accuracy when auditing small individual repositories, as shown in our experiments.  
Second, as noted by \cite{generalccs24}, most auditing approaches do not provide a provable guarantee on FDR.

\revise{To simultaneously achieve the four desired properties, we first introduce a new code auditing framework that provides enhanced data efficiency and a provable FDR guarantee, thereby addressing the latter two properties. Building upon this framework, we design a concrete marking algorithm specifically tailored to the code domain, which together satisfy all desired properties.

}

\noindent \textbf{Code auditing framework.}  \revise{In our code auditing framework, we consider three parties: the code author, the model trainer, and the auditor (e.g., a court). The code author owns a repository composed of multiple source files. Before releasing it publicly, the author embeds marks into the code to enable future auditing.
To insert these marks, the author generates $m$ semantically equivalent variants for each code file and randomly selects one version to publish while keeping the remaining variants private. Later, the model trainer collects large amounts of online code without authorization and uses it to train a code LLM, which may include the marked repository. 
During the detection phase, the auditor computes the code LLM-based loss of the published version and all its variants, ranks the published version within each set, and performs a hypothesis test over the aggregated ranks across all files to determine whether the model was trained on the marked repository. }

\revise{The core idea of our framework is to capture and aggregate the statistical behavior differences between the trained and untrained cases for each code file.} In particular, due to the randomness in the selection of the published version, if the code LLM is not trained on a code file, the rank of the published version will be uniformly distributed among $\{1,\cdots,m\}$. In contrast, if the code LLM is trained on a code file, then the rank is likely to be biased towards $1$. Our detection algorithm leverages the summation of the ranks across different code files to amplify this bias for detection. Due to the uniform property of the rank distribution under the untrained case, the FDR of our method is theoretically upper-bounded.

\noindent \textbf{Code marking algorithm.}
The remaining challenge is how to design a dedicated marking algorithm and its corresponding detection algorithm for code that fits into our framework. The critical property our marking algorithm has to satisfy is that the $m$ variants of code it generates all preserve the program semantics, while the difference between the $m$ variants and the original code should be imperceptible to the model trainer.

To preserve the program semantics of marked code, our marking algorithm focuses on renaming local variables. For each code file, we select one variable to rename. A similar variable set of size $m$ is generated based on token likelihoods computed by an oracle code LLM. The selected variable is then renamed to a variable randomly chosen from this set. The different $m$ variable names chosen from this set correspond to the $m$ versions of the code file required by our code auditing framework. 
The integration of code variable renaming into our data marking framework yields \name, which satisfies all four desired properties.

We further improve the data efficiency of \name by injecting multiple marks into long code files.  When the model is not trained on the marked file, the randomness in publication selection ensures that, during detection, the ranks of different injected marks within the same file are independent of each other and uniformly distributed over $\{1, \dots, m\}$. This property guarantees the correctness of the detection of multiple marks per file and greatly improves the data  efficiency of our framework.

\name consistently achieves high detection accuracy across different code LLMs and datasets. Under a 5\% FDR guarantee, it correctly identifies over 90\% of the repositories used for training, whereas the best baseline detects only about 54\%. Moreover, \name maintains strong imperceptibility—the marked code has an average perplexity of 1.11, closely matching the unmarked code’s 1.04.

Our contributions are as follows:\\
\begin{itemize}
    \item We propose a general code auditing framework with a theoretical FDR guarantee and better data efficiency.
    
    \item We propose a marking algorithm for code that can generate $m$ variants for a single code file while preserving program semantics and imperceptible to the model trainer. Incorporating the marking algorithm into our general framework, we introduce \name, a new data-usage auditing framework for code LLMs that simultaneously achieves the four desired properties.

    \item We validate the effectiveness of \name through extensive experiments. Experiments show that it achieves high accuracy in detecting the training data-usage of code LLMs and significantly outperforms all previous baselines.

\end{itemize}

\section{Background and related work}
\subsection{Code large language models}

 Modern large language models (LLMs) typically utilize the Transformer architecture~\cite{transformer}.  The LLM's input context consists of a sequence of tokens, denoted as $\left\langle s_1, s_2, \cdots, s_L\right\rangle$. These LLMs predict subsequent tokens in an autoregressive manner. We denote the vocabulary of an LLM as $\mathcal{V}$. The LLM predicts the next token $s_{L+1}$ by first mapping the token sequence to a logit vector $z\in \mathbb{R}^{|\mathcal{V}|}$:\\
$$z=\mathsf{LLM}(\left\langle s_1, s_2, \cdots, s_L\right\rangle)$$

Then the model uses a decoding strategy to decide the next token $s_{L+1}$ based on the logit vector. Commonly adopted strategies include top-$k$ sampling~\cite{topk} and nucleus sampling~\cite{topp}. 
The newly generated token $s_{L+1}$ is then appended to the model input context to generate the next token. This autoregressive process repeats until a stop condition is met, typically when a special end-of-sequence token is generated or when a maximum sequence length is reached.

Code LLMs are large language models designed to assist developers in software engineering tasks such as code completion, debugging, and documentation. 
They share the same architecture as general-purpose LLMs but are trained on a large number of code repositories.

\subsection{Membership inference attack}
Membership inference attack (MIA) is a type of confidentiality attack in machine learning, which aims to infer whether a particular data sample has been used to train a target ML model~\cite{mia1,mia3,mia4,mia5}. Existing MIA methods can be classified into loss-based attacks~\cite{mia3,mia4,mia6,mia7} and shadow model-based attacks~\cite{mia1,mia5}.

MIA can serve as a passive data-usage auditing method, as it does not require any prior modification or marking of the training data. 
Most existing MIAs operate at the instance level, aiming
to identify whether a single example was part of the training data. 
 A few works~\cite{maini2021dataset,mainillm,tongmuch} have generalized MIA to the \emph{dataset} level, detecting whether a given dataset was used to train a model by aggregating instance-level MIA results across data points.  This line of work is more closely related to our scenario, where an auditor seeks to detect whether a code LLM has been trained on a given code repository.  
However, applying MIA to audit the data-usage of code LLMs faces a critical limitation: although thresholds for MIA methods can be calibrated under experimental conditions by setting an empirical FDR using known member and non-member labels, in real-world auditing scenarios, where such labels are unavailable, there is no guarantee on the FDR of MIA methods.

\subsection{Data marking}

Data marking is a type of proactive technique that allows a data owner to audit the use of their data in a target ML model~\cite{generalccs24,backdoor_wm_1,backdoor_wm_2,backdoor_wm_3,radio,datamark_llm1}. Such methods usually consist of a marking algorithm
that embeds marks into data and a detection algorithm that tests whether the data has been used to train a given model. Prevalent data marking methods can be divided into two categories: backdoor marking and contrastive marking. 

\revise{\subsubsection{Backdoor marking}

Backdoor marking has been extensively studied in prior work~\cite{backdoor_wm_1,backdoor_wm_2,backdoor_wm_3}. 
In this paradigm, the data owner poisons a small fraction of the training set by injecting a trigger pattern and relabeling these samples to a chosen target class—mirroring standard backdoor training~\cite{badnets,chen2017targeted,lin2020composite,cheng2021deep}. Then, the data owner releases the marked dataset. 
During detection, the owner adds the trigger to new data points and examines whether the model outputs the chosen target label.

To the best of our knowledge, CodeMark~\cite{sun2023codemark} is the only work that designs a backdoor marking scheme specifically for code LLMs.  It embeds triggers as co-occurrence patterns within the source code.
Detection is then performed by testing on new code files whether these patterns frequently appear in the model’s outputs.

Applying backdoor marking to code faces an inherent data efficiency challenge: in practical auditing, the target repository may contain very few files,  yielding an extremely low poisoning ratio and,  consequently, weak detection power. 
Empirically, as shown in Table~\ref{tab:main}, CodeMark~\cite{sun2023codemark} detects only a small fraction of marked repositories.   
Fundamentally, backdoor marking methods rely on learning a \emph{strong trigger signal} that can generalize across samples, which is hard to satisfy under very low poisoning rates.

\subsubsection{Contrastive marking} \label{sec:datamark_llm}

To improve data efficiency, another line of research explores \emph{contrastive marking}~\cite{datamark_llm1,generalccs24,sablayrolles2020radioactive,wang2024diagnosis}. In this paradigm, the data owner generates multiple semantically equivalent variants for each data point and releases only one of them publicly. 
During detection, the owner compares the target model’s behavior (e.g., loss) on the published version against its unpublished counterparts. 
If the published version was not used during training, the losses across all variants should be similar; otherwise, the published version typically exhibits a smaller loss.

Compared to backdoor methods that rely on learning strong trigger signals, contrastive marking exploits subtle statistical differences in the model's behavior between seen and unseen data.  Consequently,  when the data owner controls very few data samples (e.g., a small code repository), contrastive marking achieves substantially higher detection accuracy, as demonstrated in our experiments (Table~\ref{tab:main}).   

Next, we briefly introduce two contrastive marking methods~\cite{datamark_llm1,generalccs24} that support auditing LLMs. Wei et al.~\cite{datamark_llm1} create multiple variants for each data file by appending randomly generated character sequences.   
During detection, the owner feeds each original (unmodified) data file into the target model and measures the number of matched tokens for each corresponding public and private version. 
They then empirically construct the distribution of matched token lengths under the null hypothesis—i.e., assuming the model was not trained on the published data—using the matched token lengths of the unpublished sequences.
The key intuition is that if the model has not seen the published version during training, its output probabilities for different random sequences should be nearly uniform; conversely, if it has been trained on the published sequence, it will exhibit a higher likelihood of reproducing that sequence.
A hypothesis test is subsequently performed to determine whether the matched token length on the published sequences significantly exceeds the null-hypothesis distribution, thereby indicating that the model has likely been trained on them. 
However, this method is difficult to apply to code LLMs, since appending random character sequences (e.g., in code comments) produces conspicuous patterns that can be easily detected and automatically removed by model trainers. 

More recently, Huang et al.~\cite{generalccs24} proposed a general contrastive marking framework that supports both image classifiers and language models.  
To the best of our knowledge, it is the only data-usage auditing framework that provides a rigorous FDR guarantee.   
Before data publication, the data owner generates two slightly modified versions of each raw data file, publishes one version chosen uniformly at random, and keeps the other private. 
During detection, the owner computes the loss for both versions and counts the number of published files whose loss is smaller than that of their private counterparts. 
If the published data were not used for training, this count follows a Binomial distribution with mean $\frac{1}{2}$.  
A hypothesis test on this statistic then provides a provable FDR bound for detecting data-usage. 
However, this method is primarily designed for dataset-level auditing. When applied to individual repositories with only a few code files, it results in suboptimal detection accuracy, as demonstrated in our experiments (Table~\ref{tab:main}).

}

\section{Problem formulation}
\subsection{Threat model}

In our code auditing framework, we consider three parties: the code author, the model trainer, and the auditor (e.g., a third-party authority). The code author owns a code repository consisting of code files $\left\{x_1, x_2, \ldots, x_N\right\}$ that will be released online on a platform like GitHub. The model trainer aims to train a code model with strong coding capabilities on a large set of code data. The model trainer assembles a training dataset $\mathcal{D}$  by collecting code online from many repositories without authorization. As such, after publication, the code author's data constitutes a subset of the model trainer's collected dataset (i.e., the repository $\{x_1, x_2, \ldots, x_N\} \subseteq \mathcal{D}$). The code LLM trainer trains a model on $\mathcal{D}$ using a learning algorithm. Then the model trainer deploys the trained code LLM to provide services to consumers (e.g., to monetize the trained code LLM).

The auditor is responsible for determining whether the deployed model has been trained on the code author’s marked repository. During the detection phase, the auditor computes the loss based on the code LLM for the published version and all its private variants, ranks the published version within each set, and performs a hypothesis test over the aggregated ranks across all files to determine whether the model has been trained on the marked repository.

Following the threat model adopted in prior data-usage auditing works~\cite{generalccs24,datamark_llm1,radio}, we assume that the auditor has access to the code LLM's output logits during inference, but has no access to its internal architecture or parameters. The auditor has full access to the marked repository, including all code files, the corresponding renaming positions, and the generated variants. 

In closed-source code LLM auditing scenarios (e.g., commercial models such as Cursor~\cite{code_llm_3}), the auditor can be a third-party authority, such as a court. Although modern commercial models typically do not expose logits to end users, it is reasonable in copyright litigation (e.g., The New York Times v.s. OpenAI~\cite{nyt_oai}) for courts to request model providers to disclose such runtime information for auditing purposes. In contrast, in open-source code LLM auditing scenarios, the auditor can simply be the code author themselves, as the logits can be directly computed from the publicly available model.

\begin{figure*}[t]

\centering
\subfloat{\includegraphics[width=0.9\textwidth]{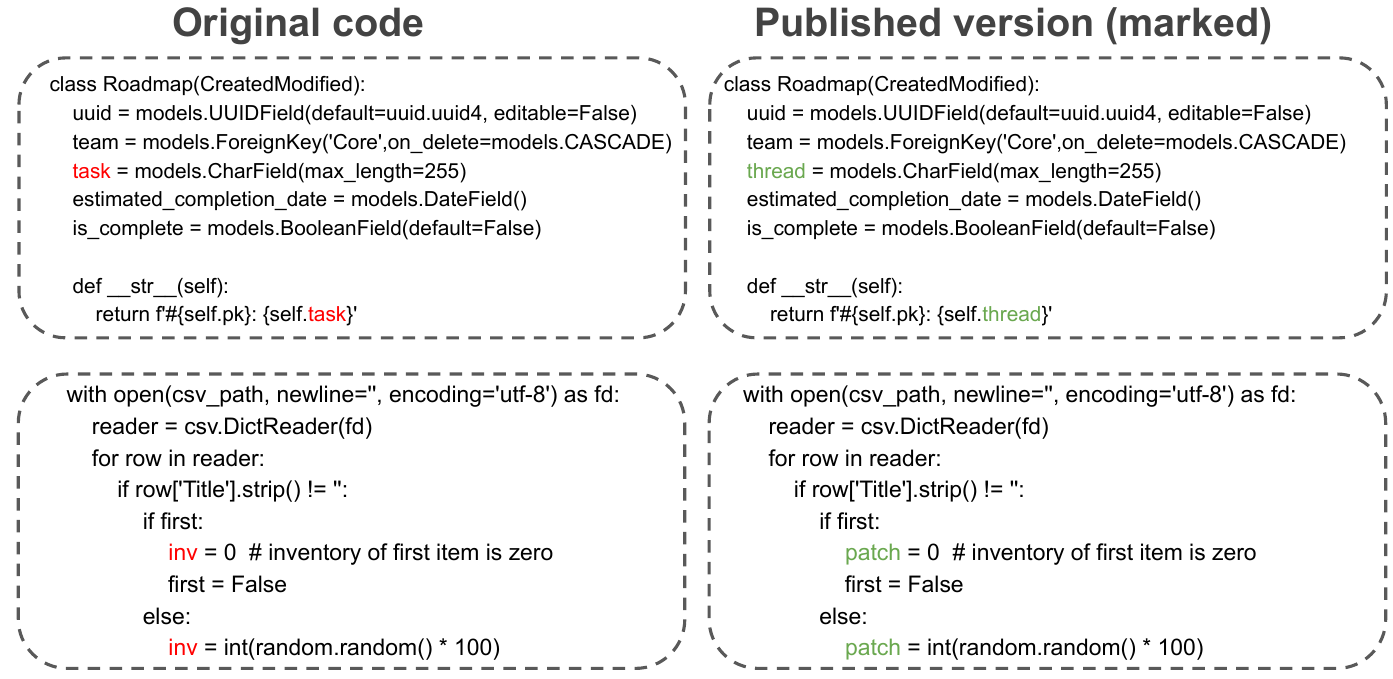}}
\caption{ The examples of code marked with \name. }
\label{fig:example}
\end{figure*}

\subsection{Design goals}
\label{sec:goal}
As mentioned in Section~\ref{sec:intro}, an effective code LLM data-usage auditing method should achieve the following properties:\\
\begin{itemize}
   \item Preservation of semantics: All modifications should preserve the original program’s semantics.
   
   \item Imperceptibility: The modifications to the code should be difficult for the trainer of the code LLM to detect.
    
\item Data Efficiency: The method should accurately detect training data-usage even when the repository contains only a small number of code files (e.g., 10–50). 

\item FDR Guarantee: The method should provide a statistical upper bound on its FDR, which is the probability of incorrectly identifying the code LLM as having been trained on the target repository when it was not.
    
\end{itemize}

An FDR guarantee of $p$ limits the probability of wrongly accusing a model of training on a protected repository to below $p$, which is critical for preventing false claims and associated ethical risks. Under a given FDR constraint, the effectiveness of auditing is measured by the detection success rate (DSR), defined as $\frac{\text{\# of correctly identified trained repositories}}{\text{\# of repositories actually trained on}}$.  A high DSR indicates that the method can detect most training repositories, which is essential for effective auditing.

Although probabilistic evidence is not as definitive as deterministic proof, it is often sufficient and widely accepted in legal practice. A classic example is DNA matching: it does not claim that ``this DNA definitely belongs to the suspect,'' but rather that ``the suspect’s DNA matches, and the probability of a random individual also matching is 1 in 10,000.'' Moreover, in copyright infringement cases, evidence only needs to meet the lower ``preponderance of the evidence'' standard~\cite{legal_dict} under US civil law, requiring just $>50\%$ likelihood of validity. All these examples demonstrate the practical potential of data-usage auditing methods that offer strong FDR guarantees when providing deterministic results is not feasible.

%\wengrui{Actually you want to express 2 things: (1) FDR should be computable in real-world auditing; (2) the method should achieve a better FDR-DSR trade-off (TPR-FPR trade-off). I think you can use first point as a bulletpoint and then use second point in this paragraph. Currently it is not clear enough}.

\section{Methodology}

\subsection{New auditing framework}

 \revise{We first introduce a new, data-efficient contrastive marking framework with a provable FDR guarantee, designed to detect whether a code LLM has been trained on the marked code repository $\{x_1, \ldots, x_N\}$. This framework serves as the foundation of \name. }   The core idea is to generate multiple semantically equivalent versions for each code file, randomly select one version to publish, and use the sum of the loss ranks of the published versions to perform hypothesis testing for detection.

 At the marking phase, the code author creates $m$ different equivalent versions of each code file $x_i$, $i\in \{1,\cdots,N\}$, denoted as $x_i^1,x_i^2,\cdots,x_i^m$. For each $i\in \{1,\cdots,N\}$, the code author uniformly randomly chooses an index $v_i \stackrel{\$}{\leftarrow}\{1,2,\cdots,m\}$, and publishes $x_i^{v_i}$ while keeping all of the other versions of $x_i$ private. %We denote the function computing the loss on an example as $\ell(\cdot)$. 
During the detection phase, for each code file $x_i$, we query the target code LLM with all $m$ of its versions and compute the corresponding losses, forming the set $\{\ell(x_i^1), \ldots, \ell(x_i^m)\}$. Our detection algorithm merely relies on the rank of $\ell(x_i^{v_i})$ in the loss of $m$ variants, namely, $\mathsf{rk}(\{\ell(x_i^1),\cdots,\ell(x_i^m)\}$,$\ell(x_i^{v_i}))$. Using ranks enables rigorous probabilistic guarantee analysis, which is difficult to achieve by analyzing loss values directly, as the latter requires assumptions about the distribution of the loss values across the $m$ variants. 

Our key theoretical insight is as follows: if the model is not trained on $x_i^{v_i}$, then $\mathsf{rk}(\{\ell(x_i^1),\cdots,\ell(x_i^m)\},\ell(x_i^{v_i}))$—the rank of $\ell(x_i^{v_i})$, is uniformly distributed among $\{1,2,\cdots,m\}$.  The uniform randomness stems from the random sampling of $v_i$, which removes the potential bias in different versions of $x_i$. However, if the model is trained on $x_i^{v_i}$, since the model is not trained on the other versions, $\ell(x_i^{v_i})$ is more likely to have a relatively small rank (e.g., less than $\frac{m}{2}$). 

Now, we can leverage hypothesis testing to build a detection algorithm with a rigorous FDR guarantee. The statistical quantity $S$ we consider for hypothesis testing is the sum of the ranks of all published data, formally defined as $S=\sum_{i=1}^N \mathsf{rk}(\{\ell(x^{1}_i),\cdots,\ell(x^{m}_i)\},\ell(x^{v_i}_i))$.

We set the rank sum threshold as $T$. 
If the model is not trained on the repository $\{x_1,\cdots,x_N\}$, then each $\mathsf{rk}(\{\ell(x^{1}_i),\cdots,\ell(x^{m}_i)\}$ $,\ell(x^{v_i}_i))$ is uniformly distributed among $\{1,\cdots,m\}$ due to the randomness of $v_i$. By the central limit theorem, $S$ will be close to its expectation $\frac{N(m+1)}{2}$ with high probability. Thus, for thresholds $T$ less than $\frac{N(m+1)}{2}$ with sufficient margin, if the model is not trained on the repository $\{x_1, \cdots, x_N\}$, the event $S > T$ occurs with high probability. In contrast, if the model is trained on the repository $\{x_1,\cdots,x_N\}$, for each sample $x_i$, $\mathsf{rk}(\{\ell(x^{1}_i),\cdots,\ell(x^{m}_i)\},\ell(x^{v_i}_i))$ is likely to be smaller than $\frac{m}{2}$, and $S$ would be much more likely smaller than $T$.  As such, the detection problem of our scheme can be formulated to test the following hypothesis:
 \begin{itemize}
     \item $H_0: \mathsf{rk}(\{\ell(x^{1}_i),\cdots,\ell(x^{m}_i)\},\ell(x^{v_i}_i))\sim \mathsf{Uniform}(\{1,2,\cdots,m\})$ \\
     \item $H_1: \mathsf{rk}\left(\left\{\ell(x_i^1), \cdots, \ell(x_i^m)\right\}, \ell(x_i^{v_i})\right)$ is biased towards  smaller values
 \end{itemize}

Under $H_0$, $S$ follows the distribution of the sum of $N$ i.i.d variables, each uniformly distributed among 
$\{1,2,\cdots,m\}$. As such, during detection, the auditor can reject $H_0$ or accept $H_0$ according to the value of $S$. The FDR of this testing procedure is provably upper-bounded, as the bound corresponds to the probability of rejecting $H_0$
  when it is actually true. 
  Given the upper-bounded FDR, rejecting  $H_0$
  implies that, with high probability, the target model has been trained on the published samples. The FDR guarantee could be adjusted by altering $T$.

\begin{figure*}[h]

\centering
\subfloat{\includegraphics[width=0.8\textwidth]{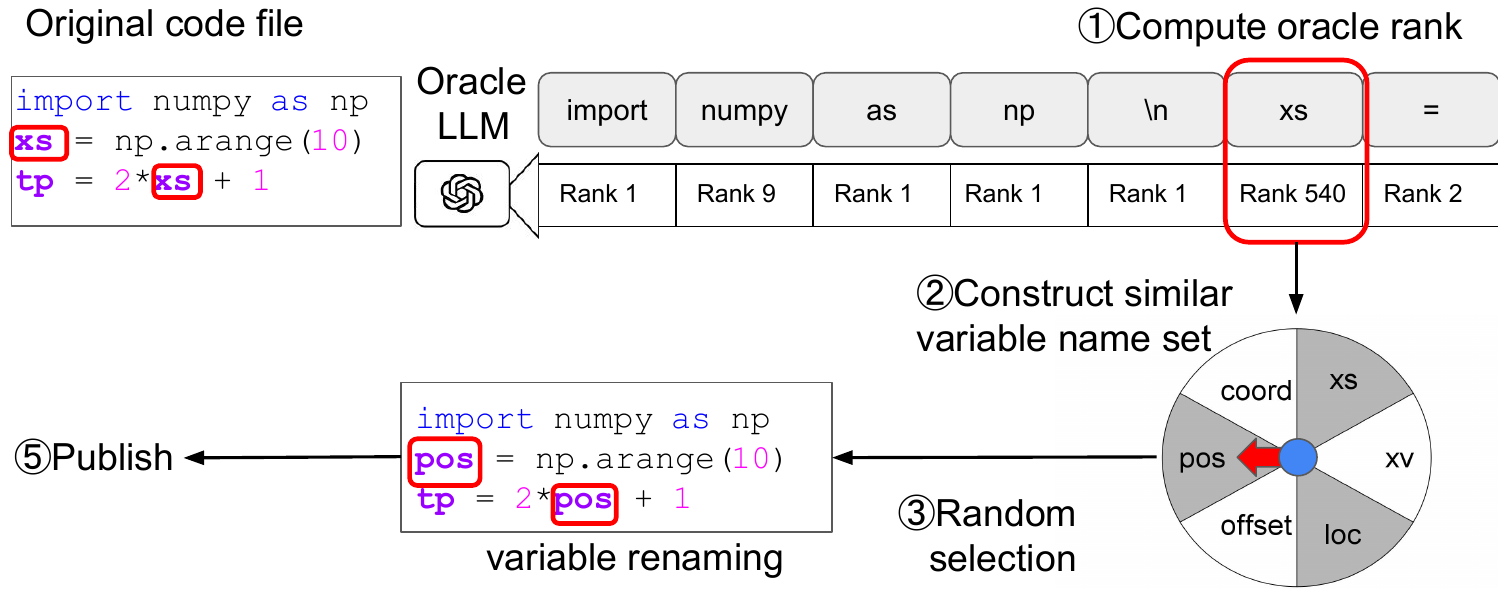}}
\caption{Illustration of \name's marking process for code.
This simplified figure highlights a single file with one variable modification for clarity. In practice, \name marks every file in the repository.}
\label{fig:marking}

\end{figure*}

\subsection{Semantic preservation and imperceptible code marking }

Under our code auditing framework, a core component is the design of an algorithm that generates semantically equivalent variants of each code file, while preserving both functionality and readability.
In this section, we focus on the design of this algorithm, which creates $m$ variants for each code file based on a single marking position.  We later extend this method to support multiple marking positions within a single file in Section~\ref{sec:scale_mark}, which further improves the data efficiency of our approach.

The preservation of code semantics is a critical property our marking algorithm has to achieve,  because disruptions to code semantics would negatively impact the repository's readability and functionality, hindering our scheme's real-world deployment.  Imperceptibility is also important for a practical marking algorithm, as the malicious model trainers are incentivized to remove the marks to avoid tracing.

To achieve both code semantics preservation and imperceptibility simultaneously, we propose a natural strategy of only renaming variables. In our marking algorithm, we only rename variables that are local variable and only consist of a single token. 
To apply a mark to a code file, we first select a variable to rename. Once a variable is selected, we use an oracle model (another code LLM) to propose alternative variable names, forming a similar variable name set of size $m$. Each of the $m$ versions of the code file is generated by renaming the selected variable with one of these $m$ alternatives.

A strawman approach is to randomly select some variables and construct the alternative list using their synonyms. However, this approach is limited by the typically small number of synonyms available for each variable.    Since higher values of $m$ lead to better detection accuracy, we instead focus on variables that admit a large set of alternative names.  Crucially, these alternatives must have similar predicted likelihoods under the oracle model. This requires the original variable name to have a relatively low predicted likelihood, so that more alternative names fall within a similar likelihood range.

This leads to a key insight: we should prefer variables whose names have a low likelihood when predicted by the oracle model.  However, a challenge arises because variables typically appear multiple times within a file, and their predicted likelihood varies across occurrences. To address this, we determine a variable's likelihood based on its first occurrence. This choice is motivated by an important observation: code LLMs find it significantly harder to predict a variable's name at its first appearance compared to later ones. This is intuitive—once a variable has been introduced, subsequent references are easier to guess, both for models and humans. Predicting the first appearance is much harder as it demands reasoning from local code patterns and broader coding conventions, rather than relying on repetition.

In conclusion, we focus on single-token variables whose first occurrence has a relatively high logits rank under the oracle model. Let $P$ denote the logits rank of the token corresponding to the variable name. We require that $P \geq R$, where $R$ is a predefined threshold, and we ensure $R \gg m$ to allow enough similar candidates. For each selected variable, we form its similar variable name set $T$ by collecting tokens whose logits ranks fall in the range $[P-\frac{m}{2}, P+\frac{m}{2})$.  Finally, to guarantee that renaming does not alter program semantics, we parse the code file using a lightweight static analysis tool to build its abstract syntax tree (AST). For each token in set $T$, we rename all occurrences of the target variable in the AST to produce each marked version of the code file.

Intuitively, our marking method can be viewed as a heuristic that renames variables whose original names ``surprise'' the oracle code LLM, replacing them with alternative names that have similar oracle rankings. The reason why we choose these ``surprising'' variables is that, if we choose variables that have small oracle ranks, to achieve imperceptibility, we can only choose other tokens that also have small rankings, resulting in a small search space of equivalent variants (i.e., small $m$). Choosing to rename these ``surprising'' variables allows for a much larger $m$, thereby increasing the information contained in the rank and improving detection accuracy.

 The imperceptibility property of this strategy can be explained from the perspective of the perplexity of the oracle model. The perplexity depends on the logits of each token. These $m$ different versions have similar logit ranks measured by the oracle model, and LLM logits typically follow a long-tail distribution. Since $P \gg m$, the logits of all $m$ different versions are likely to be very similar to each other. This indicates that, the change to the code caused by renaming variables is imperceptible measured by the perplexity of the oracle model.

%However, some variables still have high likelihood at their first occurrence.  To resolve this issue, 

\afterpage{\begin{figure*}[h]

\centering
\subfloat{\includegraphics[width=0.9\textwidth]{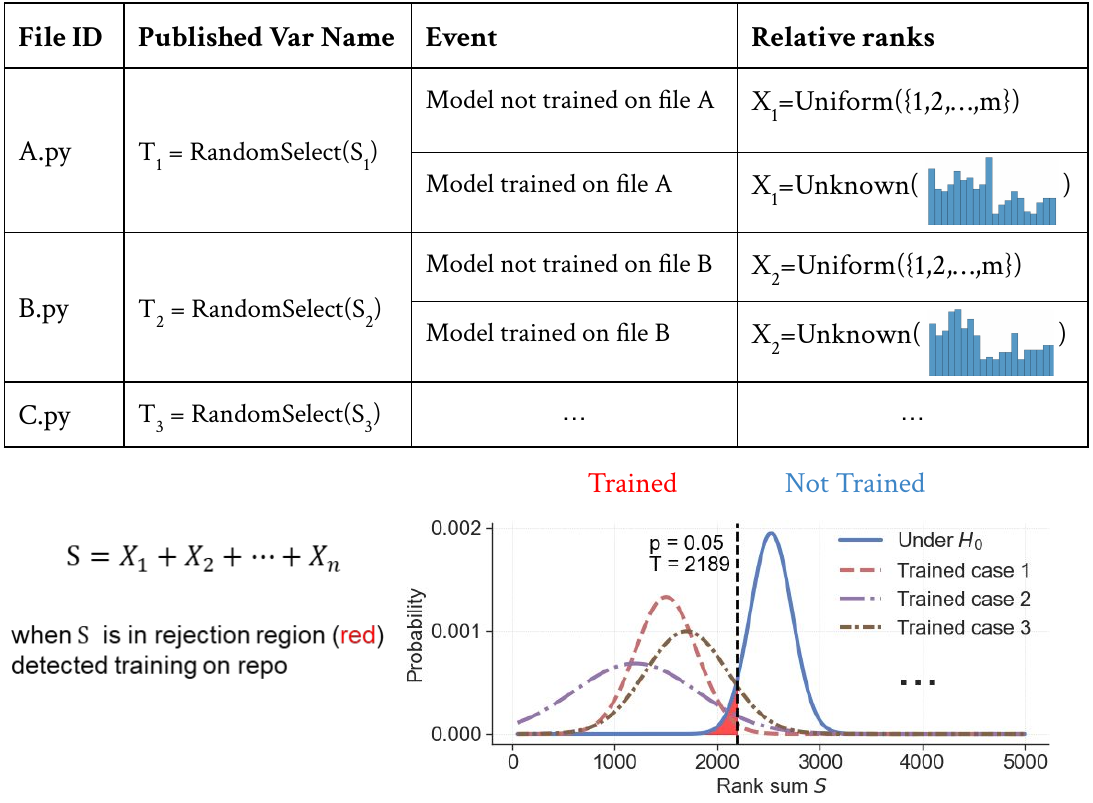}}
\caption{Illustration of RepoMark’s detection process for a single repository with multiple files (containing multiple marks). }
\label{fig:detection}
\end{figure*}}

\begin{figure}[h]
\centering
\subfloat{\includegraphics[width=0.35\textwidth]{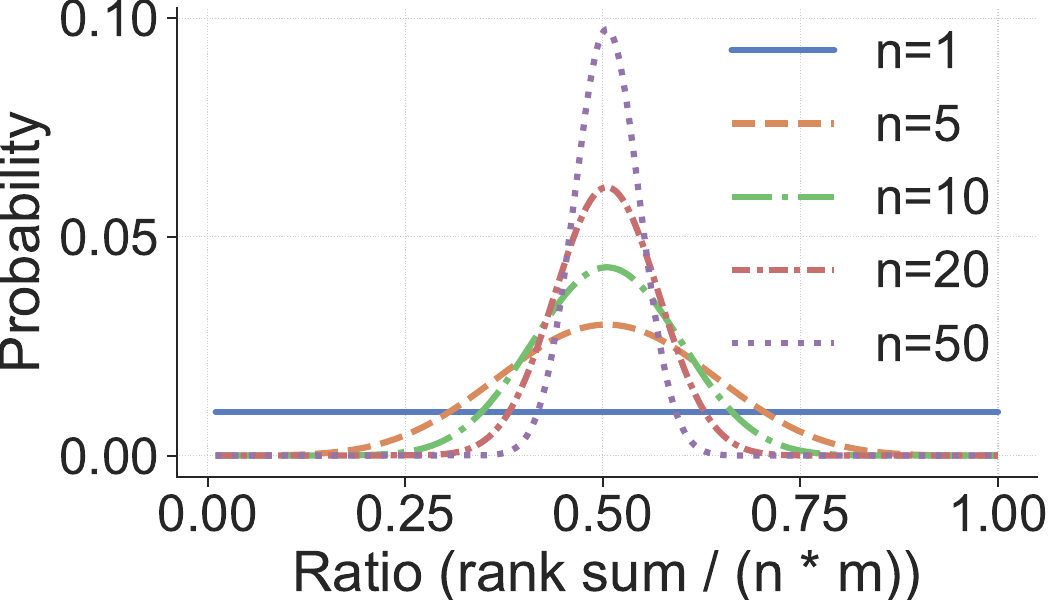}}
\caption{The distribution of the ratio of rank sum to $n*m$ under $H_0$ (the untrained case). With larger $n$, the distribution of the rank sum is more concentrated to $\frac{n(m+1)}{2}$, under $H_0$. }
\label{fig:distr}
\vspace{-1em}
\end{figure}

The complete marking procedure for a single code file is formalized in Algorithm~\ref{alg:replacement}. During repository-level auditing, this algorithm is applied iteratively to each file.

\subsection{Detection with  FDR guarantee}
\label{sec:fp_detection}

In this section, we provide a detailed description and analysis of \name's detection algorithm, with an illustration provided in Figure~\ref{fig:detection}. Its high-level idea is to capture and aggregate the statistical behavior differences between the trained and untrained cases for each code file.
 In particular, the auditor computes the loss ranks of the published version in different code files, and test whether the rank sum deviates significantly from the $H_0$ (untrained) case.

We first analyze the rank distribution of each code file under $H_0$ (untrained case).  We denote the code LLM weight as $w$, and it follows an unknown distribution $W$. Target model with weight $w$ accepts the token sequence $x$ as input and predicts the probability of variable name $t$ being the next token after $x$. For simplicity, we denote the corresponding cross-entropy loss as $f(w, x, t)$. We denote the $m$ versions of variable renamings as $\{t^1,\cdots,t^m\}$, and the published version's ID as $r\sim \mathsf{Uniform}(1,\cdots,m)$. For the rank of the published version $\mathsf{rk}(\{f(w,x,t^1),\cdots, f(w,x,t^m)\}, f(w,x,t^r))$, we have the following theorem:\\
\begin{theorem}
    Under $H_0$, $\mathsf{rk}(\{f(w,x,t^1),\cdots, f(w,x,t^m)\}$ $, f(w,x,t^r))$ is uniformly distributed among $\{1,\cdots,m\}$.
\end{theorem}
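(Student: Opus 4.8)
The plan is to isolate the only randomness that survives under the null hypothesis, namely the uniform and independently chosen publication index $r$, and to show that conditioning on the model weights $w$ and the input $x$ collapses the rank into a deterministic bijective function of $r$. First I would condition on a realization of $w \sim W$ together with the fixed context $x$. The null hypothesis asserts that the model has not been trained on any variant of this sample, so $w$ carries no information about which variant was published; consequently $r$ and $w$ are independent, and $r$ remains uniform on $\{1,\dots,m\}$ after conditioning on $(w,x)$. Once $(w,x)$ is fixed, the $m$ losses $f(w,x,t^1),\dots,f(w,x,t^m)$ are determined real numbers rather than random quantities.

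Next I would invoke distinctness of these $m$ losses to build the bijection. Let $\sigma$ be the permutation of $\{1,\dots,m\}$ that sorts them in ascending order, so that $f(w,x,t^{\sigma(1)}) < \dots < f(w,x,t^{\sigma(m)})$. By the definition of $\mathsf{rk}$ as the ascending rank, the rank of the published loss $f(w,x,t^r)$ equals exactly $\sigma^{-1}(r)$, and $\sigma^{-1}$ is a bijection of $\{1,\dots,m\}$ onto itself. Since a bijection pushes a uniform law forward to a uniform law, the conditional distribution satisfies $\Pr[\mathsf{rk} = k \mid w, x] = 1/m$ for every $k \in \{1,\dots,m\}$. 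Taking expectation over $w \sim W$ then yields $\Pr[\mathsf{rk} = k] = \mathbb{E}_{w}[1/m] = 1/m$, which is precisely the claimed uniformity.

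The main obstacle I anticipate is the treatment of ties, where two or more losses coincide and the map $r \mapsto \mathsf{rk}$ ceases to be a bijection. I would resolve this either by treating the loss as a continuous random variable under $W$, so that exact ties occur with probability zero and the bijection argument applies almost surely, or by fixing a deterministic tie-breaking rule (for instance, breaking ties by variant index) and verifying that the resulting rank map is still a bijection of $r$. The other point I would state explicitly, although it is conceptual rather than technical, is the independence of $r$ and $w$: it is the formal content of the \emph{untrained} assumption encoded in $H_0$, and it is exactly what guarantees that no variant, and in particular the published one, enjoys any systematic loss advantage that would skew the rank away from uniform.
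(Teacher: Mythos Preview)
Your proposal is correct and follows essentially the same route as the paper: both arguments use the independence of $r$ and $w$ under $H_0$, condition (equivalently, marginalize) on $w$ so that the $m$ losses become deterministic, observe that exactly one variant achieves each rank $k$ (your permutation $\sigma^{-1}$ makes this bijection explicit, whereas the paper simply states it), and then average over $w$ to obtain $1/m$. Your explicit discussion of ties is a small extra care that the paper leaves implicit.
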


\begin{proof}
   Under $H_0$,  $w$ is independent with $r$.     $\forall k\in \{1,\cdots,m\}$, we have:\\
%\begin{footnotesize}
\begin{small}
    \begin{align}
    \begin{aligned}
             &\Pr(\mathsf{rk}(\{f(w,x,t^1),\cdots, f(w,x,t^m)\}, f(w,x,t^r))=k)\\
     =& \sum_{y=1}^m \sum_{w'\in W} \Pr(\mathsf{rk}(\{f(w',x,t^1),\cdots\}, f(w',x,t^y))=k) % \cdot \\
     \\
     &\Pr(r=y,w=w') \\
     =& \sum_{y=1}^m \sum_{w'\in W}  \Pr(\mathsf{rk}(\{f(w',x,t^1),\cdots\}, f(w',x,t^y))=k) \\
     &\cdot \frac{1}{m}  \Pr(w=w') \\
     =& \frac{1}{m} \sum_{w'\in W} \Pr(w=w') \\
     &\sum_{y=1}^m  \Pr(\mathsf{rk}(\{f(w',x,t^1),\cdots\}, f(w',x,t^y))=k)   
        \end{aligned}
        \label{eq:thm:core}
\end{align}  
\end{small}

%\end{footnotesize}

Given $w',x$, $\{f(w',x,t^1),\cdots, f(w',x,t^m)\}$ is a deterministic set, with only one element in it having rank $k$.  Thus, $\forall k\in \{1,\cdots,m\}$, we have:\\
\begin{align*}
   & \sum_{y=1}^m  \Pr\left(\mathsf{rk}(\{f(w',x,t^1),\cdots\}, f(w',x,t^y))=k\right)=1\\
\end{align*}

Substituting the above equation into Equation~\ref{eq:thm:core}, we have:\\
\begin{align*}
 &\Pr(\mathsf{rk}(\{f(w,x,t^1),\cdots, f(w,x,t^m)\}, f(w,x,t^r))=k)\\
             =&\frac{1}{m} \sum_{w'\in W} \Pr(w=w') = \frac{1}{m}
\end{align*}
\end{proof}

The above theorem shows that when model $w$ is not trained on the published version $t^r$, the rank of the published version’s loss among the losses of all versions $\{t^1, \cdots, t^m\}$ is uniformly distributed over $\{1, \cdots, m\}$. This uniform randomness essentially stems from the uniform selection of $r$ from $\{1, \cdots, m\}$.

When $w$ is trained on the published version $t^r$, $w$ becomes dependent on $r$, and the rank is no longer uniformly distributed over $\{1,\cdots,m\}$. While the complex behavior of LLMs makes the exact rank distribution difficult to characterize, training should bias ranks toward smaller values.  In particular, the untrained mean rank is $\frac{m+1}{2}$, whereas the trained mean should be significantly smaller.

There are multiple code files in one repository. Therefore, we can leverage the rank sum to aggregate the message delivered by the mark in each code file. With more injected marks (larger $n$), the distribution of the rank sum of the $n$ marks under the untrained case is more concentrated around $\frac{n(m+1)}{2}$, as we can see in Figure~\ref{fig:distr}.

 We denote the rank sum of different mark positions as $S$ and the hyperparameter of rank sum threshold as $T$.  If the model is not trained on the target repository, by the central limit theorem, $S$ will be close to its expectation $\frac{n(m+1)}{2}$ with high probability. Therefore,  if the model is not trained on the target repository, for $T< \frac{n(m+1)}{2}$ with a sufficient gap, $\Pr(S\le T)$ will be very  small. In contrast, if the model is trained on the target repository, each $\mathsf{rk}({\ell(x^{1}_i),\cdots,\ell(x^{m}_i)},\ell(x^{v_i}_i))$ is much more likely to have a value smaller than $\frac{m}{2}$, and consequently $S$ is likely to be smaller than $T$. As such, the detection problem of our scheme can be formulated as the following hypothesis test:
 \begin{itemize}
     \item $H_0: \mathsf{rk}(\{\ell(x^{1}_i),\cdots,\ell(x^{m}_i)\},\ell(x^{v_i}_i))\sim \mathsf{Uniform}(\{1,2,\cdots,m\})$ \\
     \item $H_1: \mathsf{rk}(\{\ell(x_i^1), \cdots, \ell(x_i^m)\}, \ell(x_i^{v_i})) $ is biased towards smaller values 
 \end{itemize}

Under $H_0$, $S$ follows the distribution of the sum of $n$ i.i.d. random variables uniformly distributed among $\{1,2,\cdots,m\}$. The auditor can reject $H_0$ or accept $H_0$ based on whether $S> T$.  In other words, the auditor detects whether their repository was used to train the target code LLM  according to whether $S\le T$. The FDR of this test is provably upper-bounded by the Type I error—the probability that $H_0$ is incorrectly rejected—which corresponds to the probability that the sum of these 
$n$ uniform i.i.d. random variables is less than or equal to $T$.   Thus, the FDR guarantee could be controlled by altering the threshold $T$.

To determine the appropriate threshold, we can first compute the cumulative distribution function (CDF)  of $S$ under $H_0$, and then perform a binary search on the CDF table. The CDF of the above distribution can be computed using generating functions~\cite{gen_func,generatingfunctionology,genfunc}.
The generating function for a single uniform discrete variable over $\{1,2,\cdots,m\}$ is $G(x) = \frac{x + x^2 + \dots + x^m}{m}$. 
For the sum of $n$ i.i.d. discrete uniform random variables over $\{1,2,\cdots,m\}$, the generating function is $G_n(x) = \left(  \frac{x + x^2 + \dots + x^m}{m}  \right)^n$.

To extract the probability mass function (PMF), we expand the polynomial coefficients of \(G_n(x)\) into a sequence. We then extract the coefficient of \(x^t\), which gives \(\Pr(S = t)\): $\Pr(S = t) = [x^t] G_n(x)$. 
The polynomial coefficients of $G_n(x)$ can be efficiently computed using Fast Fourier transform~\cite{FFT}. Then we can compute the CDF table leveraging the computed PMF, and use binary search on the CDF array to find the largest $T$ such that $\Pr(S \le T) \leq p$.  The detection procedure of \name is presented in Algorithm~\ref{alg:detection}.

\subsection{Further improving the data efficiency of \name}
\label{sec:scale_mark}
A remaining problem is that, for a lengthy code file, our current design cannot effectively utilize its information redundancies, as it only injects one mark  for each file. To better utilize the information redundancy in lengthy code files, we further derive the case of renaming multiple different variables in a single file. We follow the notations used in Section~\ref{sec:fp_detection}. Under $H_0$, the core difference when injecting multiple marks is that, now $x$ is no longer a deterministic token sequence, but a random sequence, whose randomness is introduced by the random selection of previous renamed variables in the same file. Under $H_0$, $w$ is independent of $x$. Denote the possible set of context string $x$ as $\mathcal{X}$. We have:\\
\begin{small}
        \begin{align}
    \begin{aligned}
        &\Pr(\mathsf{rk}(\{f(w,x,t^1),\cdots, f(w,x,t^m)\}, f(w,x,t^r))=k)\\
     =& \frac{1}{m} \sum_{w'\in W} \Pr(w=w') \\ 
     &\sum_{y=1}^m  \Pr(\mathsf{rk}(\{f(w',x,t^1),\cdots\}, f(w',x,t^y))=k)\\
      =&\frac{1}{m} \sum_{w'\in W} \Pr(w=w') \sum_{p\in \mathcal{X}}  \Pr(x=p)   \\
      & \sum_{y=1}^m\Pr(\mathsf{rk}(\{f(w',p,t^1),\cdots\}, f(w',p,t^y))=k)
    \end{aligned}
    \label{eq:more_position_case}
\end{align}
\end{small}

Given $p$, $y$, $w'$, each $\{f(w',p,t^1),\cdots\}$ is a deterministic set.  Therefore,\\
$$\sum_{y=1}^m  \Pr\left(\mathsf{rk}(\{f(w',p,t^1),\cdots\}, f(w',p,t^y))=k \right)=1$$

Combining this with Equation~\ref{eq:more_position_case}, we have:\\
    \begin{align*}
    \begin{aligned}
        &\Pr(\mathsf{rk}(\{f(w,x,t^1),\cdots, f(w,x,t^m)\}, f(w,x,t^r))=k)\\
      =&\frac{1}{m} \sum_{w'\in W} \Pr(w=w') \sum_{p\in \mathcal{X}}  \Pr(x=p) \\=&\frac{1}{m}  
    \end{aligned}
    \label{eq:more_position_case_uniform}
\end{align*}

The above equation shows that if the model weights $w$ are not trained on the file, then when multiple marks are injected into the same code file, the ranks of the published versions for each injected mark are still i.i.d variables, uniformly distributed over $\{1, \cdots, m\}$. This property ensures that \name can support injecting multiple marks (i.e., renaming multiple variables) within a single file, making our method highly scalable to long code files.

%Now for $N$ code files, our method could support $n$ marking positions, where $n$ could be $\gg N$.  

We use mark sparsity threshold $K$ to control how many marks we should inject into one code file. Given mark sparsity threshold $K$, at most one mark can be injected per $K$ lines of code.
In other words, for a code file with $L$ lines of code, we only add at most $\lfloor\frac{L}{K} \rfloor$ marks. Following the previous construction, we only rename local variables whose first occurrence has an oracle model rank greater than or equal to the threshold $R$.  If there are more than $\lfloor\frac{L}{K} \rfloor$  candidate variables that satisfy this property, we randomly select $\lfloor\frac{L}{K} \rfloor$ variables among them. By choosing an appropriate $K$ value, we can effectively balance imperceptibility and detection accuracy. When a smaller $K$ is chosen, the detection would be more accurate as the injected mark number $n$ is larger. However, in the meantime, it would be easier for the model trainer to detect  the mark.

%\subsection{Extending to the OpenAI API setting}
%\label{sec:extend_openai}

%We next describe how to adapt our algorithm to operate under a more restricted logits access setting, aligning with that of OpenAI API~\cite{oai_api}. Specifically, OpenAI's create completion API does not provide full access to the logits vector; instead, it only returns the log-probabilities of the top-20 tokens at each position via the $\mathsf{top\_logprobs}$ parameter.

%Our key adaptation leverages its $\mathsf{logit\_bias}$ parameter, which enables adding specified biases to a set of selected tokens. By assigning a sufficiently large positive bias to a target token set, we can ensure these tokens to appear in the top 20, forcing the API to return their log-probabilities. Importantly, applying the same bias to all tokens preserves their relative logit rankings. This adaptation allows us to indirectly infer the relative ranking of any token set of up to 20 tokens at a given position. As a result, our detection algorithm remains compatible with the restricted logits access setting of the OpenAI API, provided the hyperparameter $m \le 20$.

\section{Experiments}

\subsection{Experimental setup}

 \begin{figure*}[t]
% \vspace{-1em}
 \centering
\subfloat[CodeParrot Dataset]{\includegraphics[width=0.32\textwidth]{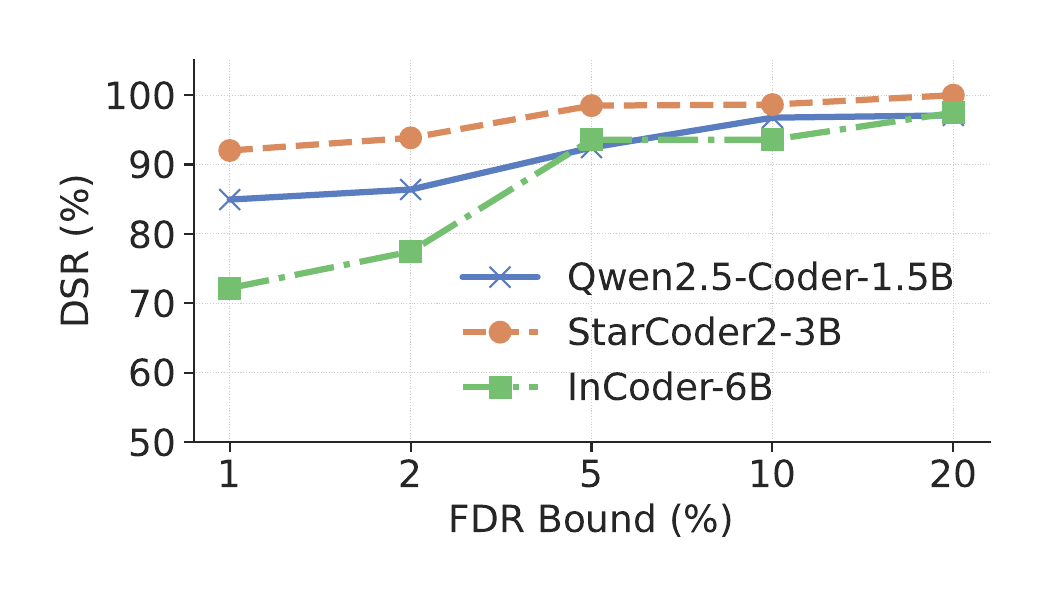}} \hspace{+1mm}
\subfloat[CodeSearchNet Dataset]{\includegraphics[width=0.32\textwidth]{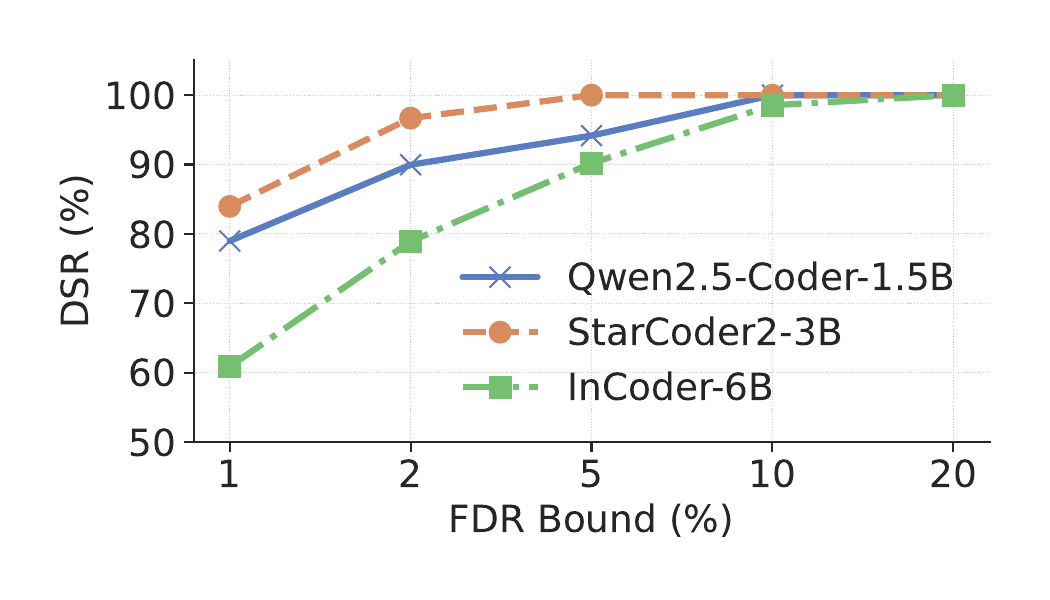}} 
\subfloat[CodeNet Dataset]{\includegraphics[width=0.32\textwidth]{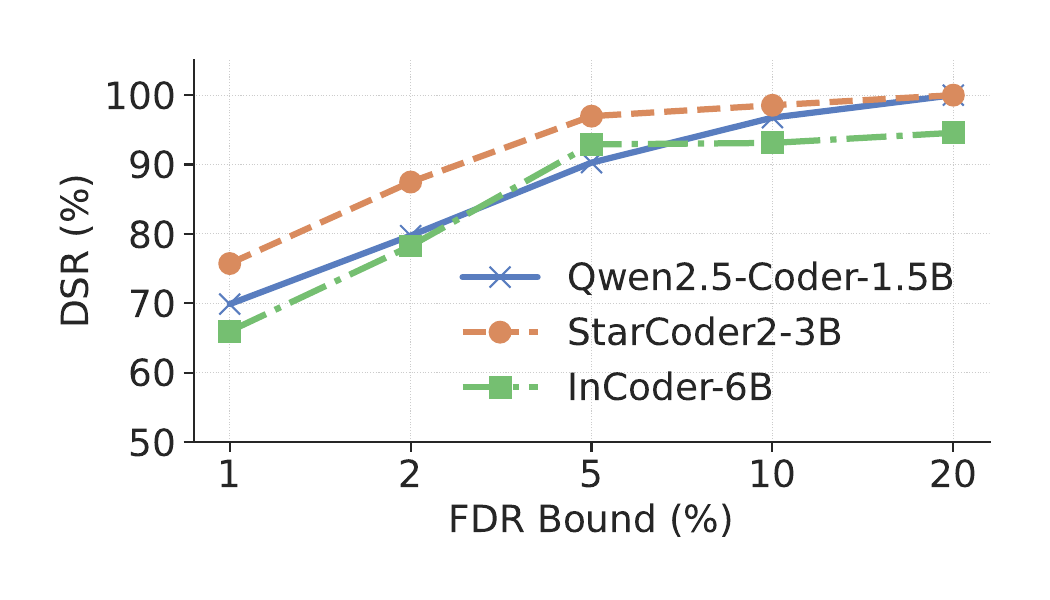}}   
 \caption{DSRs of \name across 3 models and 3 datasets under FDR guarantee levels of 1\%, 2\%, 5\%, 10\%, 20\%.}
% \vspace{-1em}
 \label{fig:mainexp}
 \end{figure*}
\begin{table*}[t]
\centering
\caption{Comparing the DSR for different data auditing methods under 1\%, 2\%, 5\%, 10\% and 20\% FDR guarantee. The experiment is conducted on \revise{Qwen2.5-Coder-1.5B} and CodeParrot datasets.} 
\renewcommand{\arraystretch}{1.1}
\scalebox{1.05}{
\begin{tabular}{|l|c|c|ccccc|}
\hline
\multirow{2}{*}{} & \multirow{2}{*}{Category} & \multirow{2}{*}{\begin{tabular}[c]{@{}c@{}}FDR  Guarantee \end{tabular}} & \multicolumn{5}{c|}{DSR (\%)} \\ \cline{4-8}
           &  &  & \multicolumn{1}{c|}{1\%} & \multicolumn{1}{c|}{2\%} & \multicolumn{1}{c|}{5\%} & \multicolumn{1}{c|}{10\%} & 20\% \\ \hline
Loss attack~\cite{shokri2017membership} & \begin{tabular}[c]{@{}c@{}}Membership inference\end{tabular} & \ding{55} & \multicolumn{1}{c|}{0.97} & \multicolumn{1}{c|}{1.14} & \multicolumn{1}{c|}{2.53} & \multicolumn{1}{c|}{7.98} & 19.14 \\ \hline
min-k~\cite{mink} (ICLR'24) & \begin{tabular}[c]{@{}c@{}}Membership inference\end{tabular} & \ding{55} & \multicolumn{1}{c|}{3.50} & \multicolumn{1}{c|}{5.25} & \multicolumn{1}{c|}{8.37} & \multicolumn{1}{c|}{15.95} & 34.63 \\ \hline
zlib~\cite{carlini2021extracting} & \begin{tabular}[c]{@{}c@{}}Membership inference\end{tabular} & \ding{55} & \multicolumn{1}{c|}{1.53} & \multicolumn{1}{c|}{3.33} & \multicolumn{1}{c|}{8.56} & \multicolumn{1}{c|}{15.68} & 30.58 \\ \hline
Dataset inference~\cite{mainillm} (NeurIPS'24) & \begin{tabular}[c]{@{}c@{}}Membership inference\end{tabular} & \ding{55} & \multicolumn{1}{c|}{8.56} & \multicolumn{1}{c|}{15.12} & \multicolumn{1}{c|}{18.68} & \multicolumn{1}{c|}{29.79} & 54.67 \\ \hline
CodeMark~\cite{sun2023codemark} (FSE'23) & Backdoor marking & \ding{55} & \multicolumn{1}{c|}{0.32} & \multicolumn{1}{c|}{0.86} & \multicolumn{1}{c|}{1.28} & \multicolumn{1}{c|}{3.45} & 9.15 \\ \hline
\begin{tabular}[c]{@{}l@{}}Huang et al.~\cite{generalccs24} (CCS'24)\\ (with \name's renaming algorithm)\end{tabular} & Contrastive marking & \ding{51} & \multicolumn{1}{c|}{39.39} & \multicolumn{1}{c|}{43.31} & \multicolumn{1}{c|}{53.79} & \multicolumn{1}{c|}{63.63} & 77.22 \\ \hline
\name & Contrastive marking & \ding{51} & \multicolumn{1}{c|}{84.94} & \multicolumn{1}{c|}{86.39} & \multicolumn{1}{c|}{92.44} & \multicolumn{1}{c|}{96.75} & 97.07 \\ \hline
\end{tabular}
}
\label{tab:main}
\end{table*}

%\begin{table}[h]
%\centering
%\caption{The empirical FDR of \name on the CodeParrot-1.5B model and CodeParrot Dataset. }
%\scalebox{1}
%{
%\begin{tabular}{|l|c|c|c|c|c|}
%\hline
%\begin{tabular}[c]{@{}l@{}}Theoretical FDR \\ guarantee\end{tabular} & 1\%    & 2\%    & 5\%    & 10\%   & 20\%    \\ \hline
%Empirical FDR                                                        & 0.89\% & 1.86\% & 4.41\% & 9.49\% & 15.08\% \\ \hline
%\end{tabular}
%}
%\label{tab:empirical_FDR}
%\end{table}

\noindent
\textbf{Models.} We conduct experiments on data marking using \revise{
Qwen2.5-Coder-1.5B~\cite{hui2024qwen2}}, StarCoder2-3B~\cite{lozhkov2024starcoder}, and InCoder-6B~\cite{incoder} as target models.   
Among them, \revise{Qwen2.5-Coder-1.5B is part of the latest Qwen2.5-Coder series, a family of large language models specialized for code understanding and generation. Models in this family achieve state-of-the-art performance among open-source code LLMs}. StarCoder2-3B, released in 2024, is a popular open-source code LLM that employs advanced attention mechanisms and a large context window. InCoder-6B, developed by Facebook, stands out for its ability to perform both standard left-to-right code generation and code infilling, making it a versatile tool for various coding tasks. For the oracle LLM leveraged by \name, we use another code LLM Yi-Coder-1.5B~\cite{01ai}.

\noindent\textbf{Datasets.} \revise{ Following prior work~\cite{lee2023wrote,code_mem1,code_mem2}, we primarily carry out our experiments on three commonly used code datasets: CodeParrot~\cite{codeparrot}, CodeSearchNet~\cite{husain2019codesearchnet}, and CodeNet~\cite{puri2021codenet}.} We focus on Python code, though our techniques are also applicable to other languages such as C++, Java, and Rust.  \revise{The CodeParrot, CodeSearchNet, and CodeNet datasets contain 410,210, 71,246, and 93,570 repositories, respectively. The file number of each repository ranges from 1 to 7,343, with an average of 12.28 files per repository in three datasets. }

% \begin{table}[t]
%\centering
%\caption{Basic statistics of 3 code datasets used in this paper.}
%\scalebox{1.05}{
%\begin{tabular}{|c|c|c|c|}
%\hline
%\multicolumn{1}{|l|}{} & CodeParrot~\cite{codeparrot} & CodeNet~\cite{puri2021codenet}  & \multicolumn{1}{c|}{CodeSearchNet~\cite{husain2019codesearchnet}} \\ \hline
%total files          & 3,888,456   & 7,460,588 &  1,375,067                            \\ \hline
%chars per file      & 9961.67         & 150.59 & 607.52                      \\ \hline
%max char      & 456237    & 840 & 20,621                     \\ \hline
%min char      & 1028       & 59 & 72                     \\ \hline
%repo info    & \ding{51}         & \ding{55}   & \ding{55}                         \\ \hline
%num of repos   & 410,210    & - & -                       \\ \hline
%files per repo   & 9.48    & - & -                       \\ \hline
%\end{tabular}
%}
%\label{tab:dataset_stats}
%\end{table}

\revise{In our experiments, we consider the \textit{training-from-scratch} setting. We randomly select 1\% of the repositories in the dataset as the target repositories to be protected. We inject marks into these repositories, while leaving the remaining repositories in the dataset unchanged. After training, we compute the DSR as the proportion of marked repositories that are successfully detected. 
Unlike dataset-level auditing methods~\cite{generalccs24,sun2023codemark} that output a single binary decision for the entire corpus, \emph{our marking and detection operate independently for repository}, yielding an individual decision for each. Consequently, our detection performance does not depend on the proportion of repositories selected for protection.  }

\noindent
\textbf{Metrics.} We evaluate \name along two dimensions: its detection accuracy, as well as its impact on code quality.  

For detection accuracy, we measure it using the DSR metric, which captures the fraction of marked repositories in the training set that are successfully identified. Following Huang et al.~\cite{generalccs24}, we measure the DSR under different FDR.  For data mark methods that have a theoretical FDR guarantee, we measure the DSR at different specified FDR guarantees; for methods that cannot provide a guarantee on FDR (e.g. membership inference methods), we measure the DSR at different specified empirical FDRs. It is worth noting that although under experimental conditions, we can compute empirical FDR for no-FDR-guarantee methods using member and non-member labels of all samples, such labels are not available in real-world auditing. Therefore, the empirical FDR values for these methods are not computable outside of the experimental setting.
 
 For code quality, we mainly measure the extent to which the data marking method alters the original code. We adopt three metrics: CodeBLEU, edit distance, and perplexity.  
 CodeBLEU~\cite{ren2020codebleu} is a classical metric to measure the quality of code adopted in previous works~\cite{wu2024codewmbench,yang2024srcmarker,li2024extracting}. It extends the traditional BLEU metric by incorporating syntax and semantics specific to programming languages, allowing for a more nuanced assessment of the code generated. It measures the similarity between the original code and the marked code, using a weighted combination of n-gram match, weighted n-gram match, AST match, and data-flow match scores. Edit distance~\cite{editdistance1,editdistance2} is a common metric used to measure the difference between two strings. It quantifies the minimum number of insertion, deletion, and substitution operations required to transform one string into another.  In our paper, to quantify the quality of marked code, we measure the edit distance between marked code and original code on a per-token basis. We also adopt the perplexity (PPL) metric, which is widely adopted in the NLP community, following previous works~\cite{ppl1,ppl2,qu2024provably,ppl3}. In our experiments, the PPL scores are computed using the Qwen2.5-Coder-32B~\cite{hui2024qwen2} model.

\noindent
\textbf{Hyperparameters.} In our experiments, we set the theoretical FDR guarantee to $p=5\%$ and the number of mark versions to $m=100$. %The default repository file count is $N=20$. We set the default repository file count to be small, as detecting training usage on a per-repository basis is significantly more challenging for smaller repositories than for larger ones. 
The mark sparsity parameter is set to $K=100$ (i.e., on average, there is at most one mark per $100$ lines of code), and the mark position rank threshold is set to $R=500$.  For model training, the learning rate is initialized to $1\times10^{-5}$. \revise{To avoid overfitting, following ~\cite{touvron2023llama}, we set the number of training epochs to 2; in other words, the model sees each training code snippet only twice.}

\noindent
\textbf{AST Parser.}
We use the tree-sitter library~\cite{tree_sit_git,tree_sit_wiki} to parse the code and perform transformations on AST. 
    We first use tree-sitter to identify the \textit{function parameter names} and \emph{variable names} to inject our marks. After locating the marking positions, we utilize tree-sitter  to perform variable renamings on the AST. 

%Tree-sitter is designed to build AST efficiently and can be used for parsing and transforming Python and Java code.
 %Other languages are also supported by tree-sitter through grammar definitions written in JavaScript. 
   % Our mark is injected in \textit{function parameter names} and \emph{variable names} found by tree-sitter parsing, and then we use the library again to traverse the code AST and perform variable renamings.
   
%\noindent
%\textbf{Hardware.}
%Our experiments run on a server with 8 NVIDIA H100 GPUs. 

\subsection{Main results}

\noindent \textbf{\name performs well on different models and datasets.} We carry out the experiments with 3 different code models, \revise{Qwen2.5-Coder-1.5B}, StarCoder2-3B, and InCoder-6B, and 3 different datasets, CodeParrot, CodeSearchNet, and CodeNet. We measure the effectiveness of the data-usage detection of \name method under different FDR guarantees 1\%, 2\%, 5\%, 10\%, 20\%, and the results are shown in Figure~\ref{fig:mainexp}. We use default hyperparameter settings for these experiments. On the CodeParrot dataset, with FDR guarantee 5\%, \name 
 achieves \revise{92.44\%}, 98.64\%, and 93.58\% DSR on \revise{Qwen2.5-Coder-1.5B}, StarCoder2-3B, and InCoder-6B, respectively. These results show that we can detect more than 90\% code training usage if these models are trained on our marked code.  The performance of \name is robust across three different datasets. Even under the worst case, it still achieves a DSR higher than $85\%$ under FDR guarantee 5\%.  These results demonstrate the strong generalizability of \name under different models and datasets, indicating the potential value of deploying \name as a tool in real-world data auditing.

\noindent \textbf{\name outperforms existing baselines.} We also compare \name with existing data-usage auditing methods, which can be categorized into two types: (1) membership inference and (2) data marking. For membership inference-based methods, we compare against   several prevalent approaches for LLMs, including loss-based membership inference~\cite{shokri2017membership}, min-k inference~\cite{mink}, zlib-based membership inference~\cite{carlini2021extracting}, and dataset inference~\cite{mainillm}. For data marking-based methods, we compare against CodeMark~\cite{sun2023codemark} (FSE'23) and Huang et al.~\cite{generalccs24}. \revise{Since Huang et al.~\cite{generalccs24} do not provide a marking algorithm for the code domain, we adapt their framework by integrating the marking algorithm of \name into their pipeline.}

Since previous results demonstrate that our method performs consistently across different datasets and models, we conduct all baseline comparisons on the Qwen2.5-Coder-1.5B model and the CodeParrot dataset due to resource constraints. The comparison results are shown in Table~\ref{tab:main}. 
It is evident that \name consistently outperforms all baselines under different configurations.  Under 5\% FDR guarantee, \name achieves the highest DSR of  \revise{92.44\%}, significantly surpassing all membership inference-based methods and data marking methods. Under the same FDR, the best data marking method except ours can only achieve a DSR of 53.79\%, while the best membership inference-based method only achieves a DSR of 18.68\%. These methods fail to identify a large proportion of the trained repositories, whereas \name only misidentifies 10\% of the repositories used for training.

%\noindent \textbf{FDR guarantee of \name is sound.} To evaluate the soundness of \name's FDR guarantee, we measure the empirical FDR on the CodeParrot dataset. The results are presented in Table~\ref{tab:empirical_FDR}. We observe that the empirical FDR of \name consistently falls below the corresponding theoretical FDR guarantee, demonstrating that our derived guarantee is tight.

\noindent \textbf{\name achieves good imperceptibility.}
We measure \name's influence on code quality using CodeBLEU, edit distance, and PPL on CodeParrot dataset. 
According to Table~\ref{tab:quality_influence}, \name effectively preserves code quality with minimal changes, offering good marking imperceptibility. It achieves a high CodeBLEU score of 0.963. The change to the original code by \name is minimal, measured by edit distance. On average, only 3.61 tokens are modified per 100 lines of code. The PPL of our marked code is also very close to the PPL of unmarked code. This demonstrates that \name's modifications are unlikely to be noticed by human inspectors and remain mostly imperceptible to the model trainer.

 \begin{figure*}[t]
% \vspace{-1em}
 \centering
\subfloat[Impact of $m$ \label{fig:impact_m}]{\includegraphics[width=0.32\textwidth]{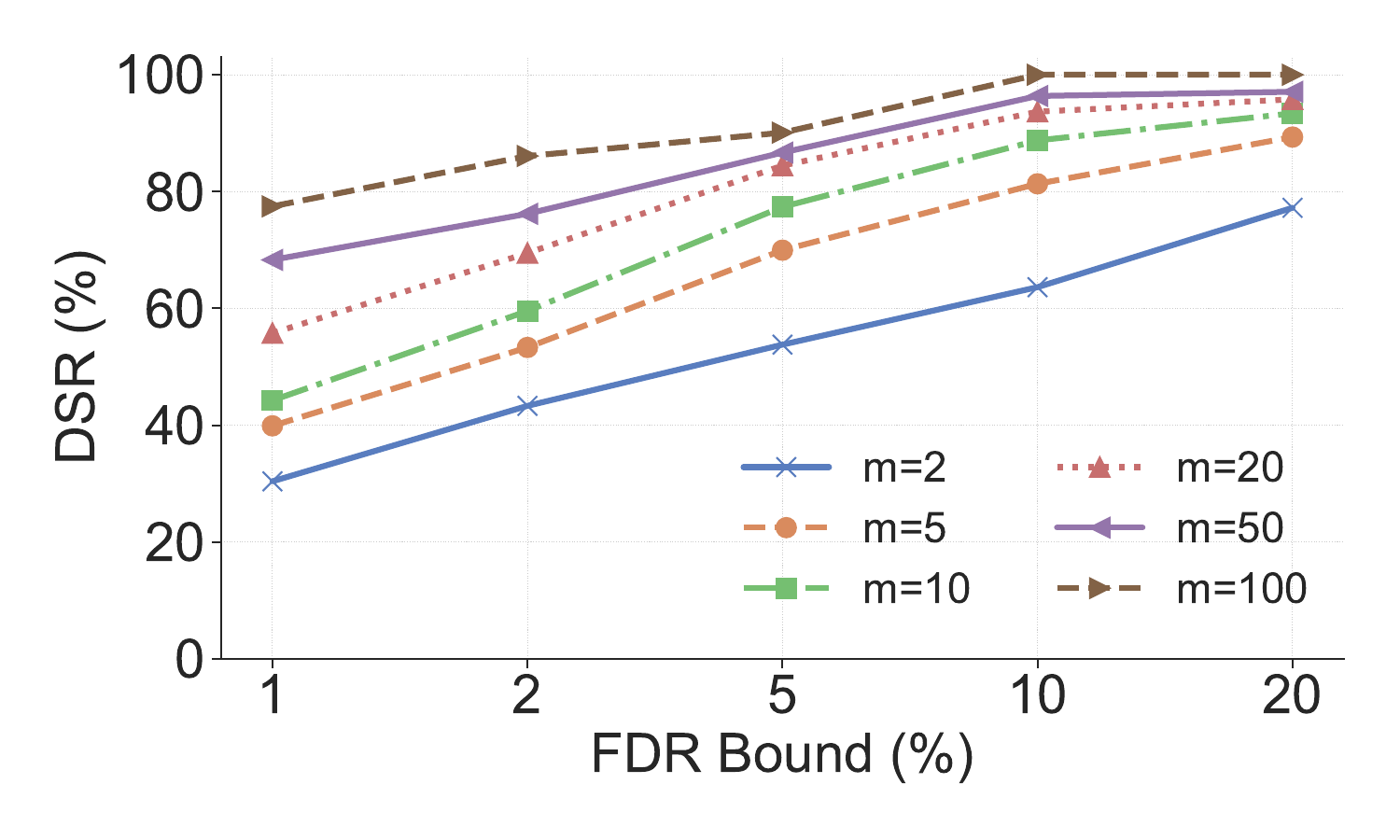}} \hspace{+1mm}
\subfloat[Impact of $K$ \label{fig:impact_K}]{\includegraphics[width=0.32\textwidth]{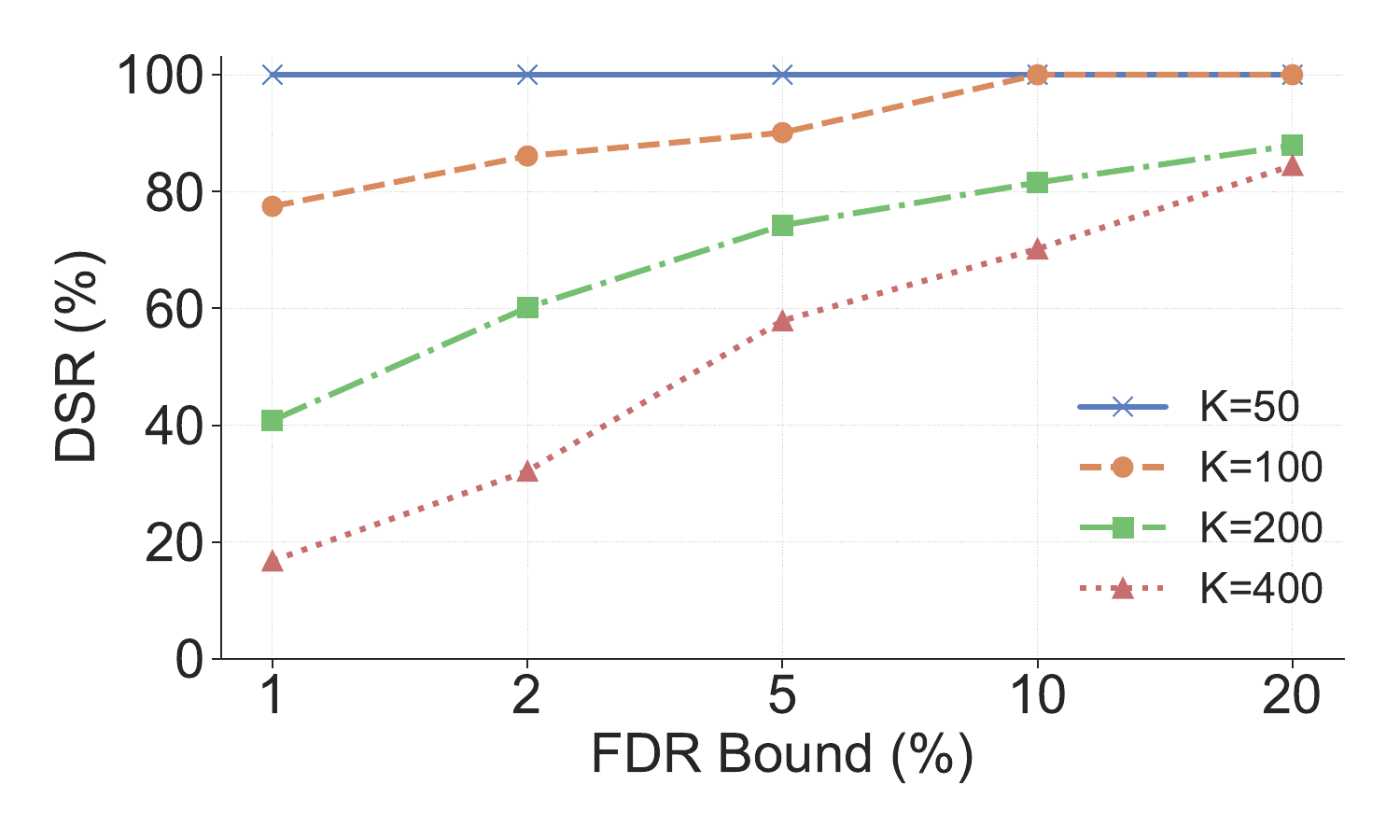}}   \hspace{+1mm}
\subfloat[Impact of $N$ \label{fig:impact_N}]{\includegraphics[width=0.32\textwidth]{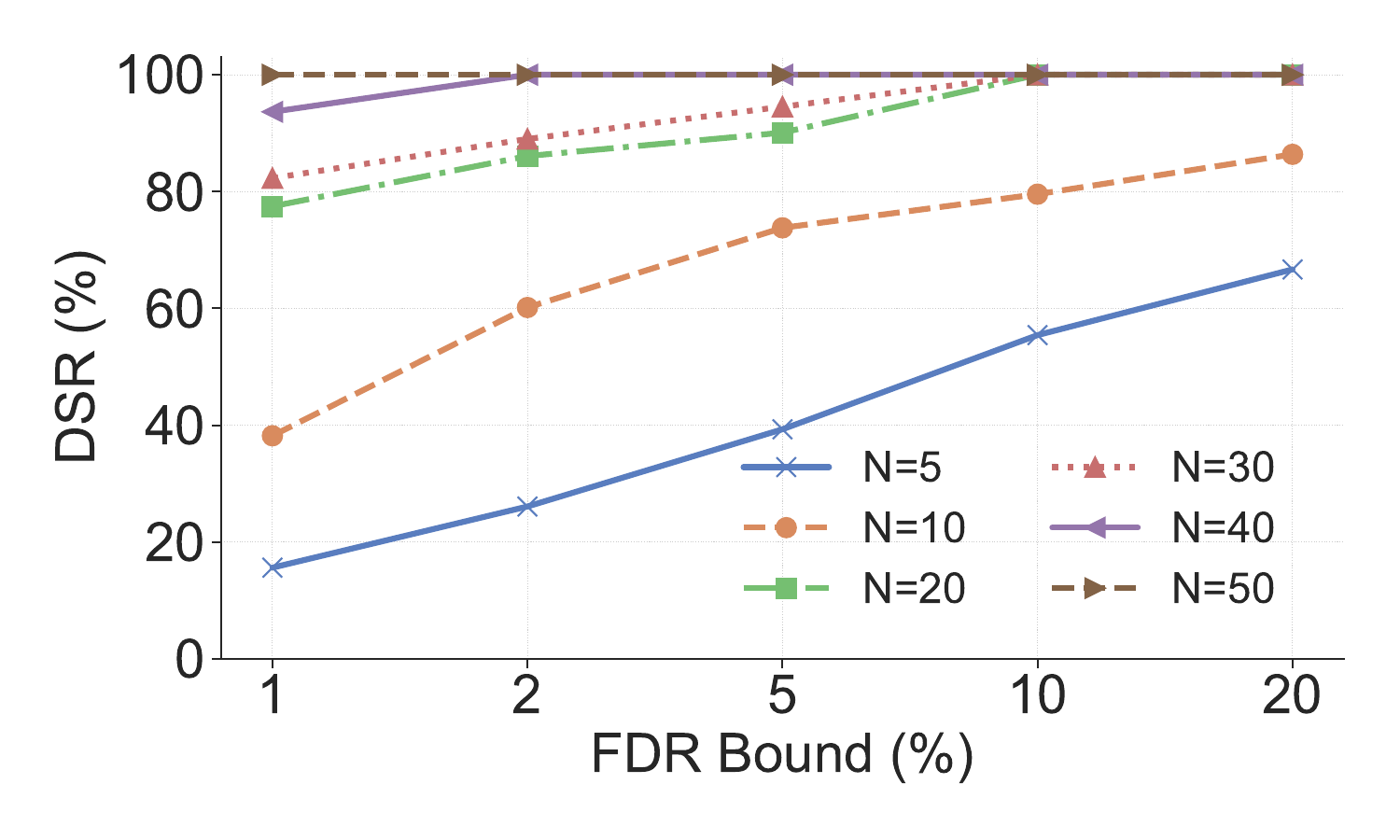}} 
%\vspace{-0.5em}
 \caption{Impact of version number $m$, mark sparsity parameter $K$, repository file number $N$  on the DSR curve. 
 }
% \vspace{-1em}
 \label{fig:ablation_m_size_K}
 \end{figure*}

\noindent \textbf{\name does not impact code LLM training.} In addition, we observe that the impact of \name on code LLM training is negligible. We employ the HumanEval dataset~\cite{chen2021evaluating} to evaluate the performance of \revise{Qwen2.5-Coder-1.5B} trained with a (1) normal CodeParrot dataset and (2) CodeParrot dataset injected by \name. Following HumanEval ~\cite{chen2021evaluating}, we measure the model performance with metric Pass\texttt{@} $k$~\cite{chen2021evaluating}, which measures the proportion of samples for which at least one of $k$ independently generated outputs passes the predefined unit tests. According to Table~\ref{tab:clean-performance}, training on dataset marked by \name only has negligible influence on model performance measured by Pass\texttt{@}$1$, Pass\texttt{@}$10$ and Pass\texttt{@}$100$, respectively. Therefore, the model trainer is very unlikely to observe the existence of our mark via the code model's capability.

 \begin{table}[t]
\centering
\caption{The influence of \name algorithm on code quality, evaluated on the CodeParrot Dataset. }
\scalebox{1.1}{
\begin{tabular}{|c|c|c|c|}
\hline
\multicolumn{1}{|l|}{} & Edit distance & PPL  & \multicolumn{1}{c|}{CodeBLEU} \\ \hline
Unmarked               & 0             & 1.04 & 1                             \\ \hline
\name             & 3.61          & 1.11 & 0.9633                        \\ \hline
\end{tabular}
}

\label{tab:quality_influence}

\end{table}

\subsection{Ablation study}
\label{sec:ablation}

In this section, we discuss the key factors that influence the detection performance of \name, including training parameters, dataset structure, and hyperparameters of our method.

% \textbf{Model size and architecture.}
 %It is assumed that our method becomes more accurate in describing the rank distribution as the number of versions $m$ increases, which should lead to better performance. 
% [table: ablation on model]
\noindent\textbf{Impact of $m$.} We first study the impact of the number of marking versions $m$ on DSR. 
As illustrated in Figure~\ref{fig:impact_m}, DSR of our detection method improves as $m$ increases. Under the 5\% FDR guarantee, the DSR increases steadily from 53.79\% under $m=2$ to \revise{92.44\%} under $m=100$. This is not surprising because when the number of versions $m$ increases, each position's rank delivers more nuanced information, enabling more accurate detection. Another observation is that the effect of increasing $m$ on the detection accuracy is gradually diminishing: under the 5\% FDR guarantee, the accuracy only increases by 5.64\% (from 84.44\% to \revise{92.44\%}) when $m$ increases from 20 to 100. This indicates that when $m$ is sufficiently large, the information gain from generating
 more code variants (i.e., larger $m$) is less significant.

 \begin{table}[t]
\centering
\caption{The influence of $K$ on code quality of \name on the CodeParrot Dataset.}
\scalebox{1.1}{
\begin{tabular}{|lc|c|c|c|}
\hline
\multicolumn{1}{|c|}{} & $K$                                                          & Edit distance & PPL  & CodeBLEU \\ \hline
\multicolumn{1}{|l|}{\multirow{4}{*}{Impact of  $K$}}        & $50$   & 6.74          & 1.20 & 0.8059   \\ \cline{2-5} 
\multicolumn{1}{|l|}{}                                     & $100$  & 3.61          & 1.11 & 0.9633   \\ \cline{2-5} 
\multicolumn{1}{|l|}{}                                     & $200$  & 1.89          & 1.10 & 0.9842   \\ \cline{2-5} 
\multicolumn{1}{|l|}{}                                     & $400$  & 0.97          & 1.09 & 0.9992   \\ \hline
\end{tabular}
}

\label{tab:quality_K}
 %\vspace{-0.5em}
%  \vspace{-4mm}
\end{table}

 \begin{table}[t]
\centering
\caption{The model performance of Qwen2.5-Coder-1.5B trained on the CodeParrot dataset with or without \name, measured on the HumanEval dataset~\cite{humaneval}.  }
\scalebox{1.1}
{
\begin{tabular}{|l|c|c|}
% \hline
\hline
%Metric$\downarrow$, Model$\rightarrow$  
Metric  &  Without mark & With mark \\ \hline
Pass@1 & 3.44\% & 3.38\%
 \\ \hline
Pass@10  & 7.80\% & 7.72\%
 \\ \hline
Pass@100   & 15.24\% & 15.11\%
 \\ \hline
\end{tabular}
}
%\vspace{-1em}
\label{tab:clean-performance}
% \vspace{-4mm}
\end{table}

\noindent\textbf{Impact of $K$.} We further evaluate how the mark sparsity $K$ would influence the DSR of \name,  and the experimental results can be found in Figure~\ref{fig:impact_K}. It can be observed that as $K$ increases, the DSR exhibits a gradual decline while \name still remains usable. When $K=50$, our detection method achieves 100\% detection accuracy under all FDR guarantees. With $K$ increasing to 200, the DSR under 5\% FDR guarantee is still higher than 70\%, indicating that our detection method demonstrates high robustness regardless of $K$. It is worth noting that while using a smaller $K$ can improve detection performance, it also leads to deterioration in code quality. As shown in Table \ref{tab:quality_K}, $K$ affects edit distance in a nearly proportional way. When $K \geq 100$, both PPL and CodeBLEU remain quite close to the unmarked case, indicating that code quality is not significantly impacted. In contrast, when $K$ is reduced to 50, the  CodeBLEU drops to 0.8059, which indicates a noticeable degradation in code quality. In conclusion, $K=100$ is a proper choice to balance the trade-off between code quality and detection performance.

\noindent\textbf{Impact of repository size.} We also evaluate the impact of repository file number $N$ on the detection performance of \name (under this case we only consider auditing repositories with file number $N$). The results are shown in Figure~\ref{fig:impact_N}.   The detection accuracy increases with the growth of $N$. When the repository file number increases to 40, DSR increases to over 90\% even under 1\% FDR guarantee. When the repository file number increases to $N\ge 50$, the DSR is 100\%, which indicates that \name successfully identify all repositories used in training under all tested FDR guarantees.     This trend aligns with our expectations because a larger repository file number enables more marking positions for detection, thus leading to a higher detection accuracy. These results highlight the effectiveness of \name in auditing commercial-scale code repositories.      At the same time, \name also demonstrates strong capabilities in handling repositories with small sizes. Notably, \name can achieve 73.79\% DSR for an extremely small repository size ($N=10$) under a 5\% FDR guarantee,  demonstrating its strong capability in protecting the copyright of ordinary users’ code.

%\noindent\textbf{Impact of Logits Access Restriction.} We further compare the performance of our model under two different threat model settings: full logits access (our default) and the restricted logits access setting imposed by the OpenAI API~\cite{oai_api}. Since the ground-truth training data of OpenAI's GPT-4 is unavailable, we simulate the API's behavior using our local LLMs. The results are shown in Figure~\ref{fig:access} in the Appendix. Remarkably, even under the restricted setting, our method achieves a DSR above 80\% while ensuring a 5\% FDR guarantee, highlighting its practicality under more constrained and realistic conditions.

%The intuition behind it is that by training less epochs, the data is less deeply memorized by the model, making it harder to be detected by \name. 

\subsection{Deployment overhead}
In this section, we discuss and explore the deployment overhead of our algorithm.

\noindent \textbf{Marking overhead.} Our marking algorithm is highly efficient because the main computational overhead of marking each file lies in a single forward pass of each file through the oracle code LLM, during which full vocabulary logits are computed for all positions.  We measured that marking one repository with 20 files takes 15.6 seconds on average. 

\noindent \textbf{Storage overhead.} The storage overhead of our algorithm is also minimal. For detection purposes, we only need to store the original token (prior to marking) and its position within the file. This information is sufficient to reconstruct the list of alternative tokens at each marked position. Under our default setting of $K = 100$—i.e., at most one mark per 100 lines of code—the average number of marked positions per file is about 3. Each marked position requires 4 bytes of storage: 2 bytes for the token ID and 2 bytes for its position. Therefore, for a large repository with 100 files, the total storage overhead is approximately 1.2KB, which is negligible.

\noindent \textbf{Detection overhead.} Our detection process primarily involves computing the relative logit ranks of the $m$ token variants at each marking position. The logits vector at each marking position can be obtained through a single forward pass of the code LLM. Consequently, the overall detection cost for a repository scales linearly with both the total number of tokens in the target repository and the number of marking positions in each code file. 

For instance, a large commercial repository such as the React library~\cite{react} contains approximately 4 million tokens. Assuming each code file has, on average, three marking positions, the total cost corresponds to forwarding around 12 million tokens. Given that state-of-the-art LLM APIs (e.g., GPT-5) charge no more than $10^{-5}$ USD per token~\cite{oai_api}, the total detection cost for a repository of this scale is estimated to be under 120 USD—demonstrating that \name remains both computationally and economically practical even for large projects.

%Under the restricted logits access setting, the detection cost increases primarily because the OpenAI API does not support changing the set of biased tokens across different positions.  As a result, we must perform a separate forward pass for each marked position. For a given code file with $S$ marked positions, this amounts to $S$ forward passes. The average $S$ per-file is about 3. Thus, detecting whether GPT-4 was trained on a large commercial repository like React is estimated to cost less than \$20, which remains negligible compared to the repository’s development cost.

\noindent \textbf{Version control strategy.} Since code files in real-world repositories are frequently updated, a version control strategy can be adopted to avoid reinjecting marks after every change. As long as the marked tokens remain unmodified, we simply update the stored marking positions to reflect the changes. New mark injection is only required when previously marked tokens are deleted, or when the user adds a substantial amount of new code that creates additional capacity for marking. In real-world auditing, to reduce detection overhead, we can choose to run the detection algorithm only on each “major version” of the repository—i.e., versions that differ substantially from the previous ones. Our detection algorithm’s FDR guarantee still holds for each individual detection, although the results across different detections may be statistically dependent.

\subsection{Potential countermeasures of the model trainer}

In this section, we explore several potential countermeasures against our marking algorithm.

% including early stopping~\cite{early1,early2,early3}, dataset filtering~\cite{chen2018detecting, tran2018spectral, doan2020februus, huang2022backdoor}, and random variable renaming. 

%Early stopping~\cite{early1,early2,early3} is a widely deployed strategy to reduce the model's memorization to training data. 
%We show the impact of different training epochs to DSR in Figure~\ref{fig:impact_epoch}. It can be observed that, the performance of \name drops with the decrease of training epochs.  When the model is trained for $3$ epoches, \name can identify almost all trained repositories under the FDR guarantee of 5\%. However, when the number of epochs decreases to $1$, the DSR reduces to 60.31\%. This aligns with the classical observations between training epochs and memorization in ~\cite{memorization1,memorization2,memorization3,memorization4}. Additionally, from Table~\ref{tab:epoch-loss} in Appendix, it can be observed that the training process requires at least 2-3 epochs to converge. This indicates that early stopping is not effective in preventing \name's detection. If the model trainer tends to train an effective model using unauthorized code, he has to face the trade-off between model quality and the risk of being caught by \name.

 %   Nevertheless, even if the model undergoes training for a single epoch, detection with \name still remains feasible, as \name can still achieve a DSR of 60\% under a 5\% FDR guarantee.
    
Dataset filtering~\cite{chen2018detecting, tran2018spectral, doan2020februus, huang2022backdoor} is a class of techniques that were initially proposed to identify and remove backdoored data from the training corpus. Following CodeMark~\cite{sun2023codemark}, we evaluate two typical dataset filtering strategies, activation clustering~\cite{chen2018detecting} and spectral signature~\cite{tran2018spectral}.  
The model trainer will use the aforementioned methods to remove the code files that are identified as marked from the training dataset.  
To evaluate the effectiveness of the dataset filtering method, we compute both the proportion of unmarked code removed from the unmarked dataset and the proportion of marked code removed from the marked dataset.    
 The results are shown in Table~\ref{tab:countermeasure}. It can be observed that both activation clustering and spectral signature are ineffective in removing our mark. For both methods, their removal ratio on marked code is similar to unmarked code, indicating that they essentially performs random guessing on which code files to remove code within the dataset. Moreover,  \name still achieves 78.4\% and 86.6\% DSR under the two removal strategies, respectively.

\begin{table}[t]
\centering
\caption{The effect of two dataset filtering strategies.}
\scalebox{0.88}
{
\begin{tabular}{|l|l|cc|c|}
\hline
\multirow{2}{*}{Removal strategy} & \multirow{2}{*}{Data marking} & \multicolumn{2}{c|}{Remove ratio (\%)}      & \multicolumn{1}{l|}{\multirow{2}{*}{DSR (\%)}} \\ \cline{3-4}
                                  &                               & \multicolumn{1}{c|}{Unmarked} & Marked & \multicolumn{1}{l|}{}                          \\ \hline
\multirow{2}{*}{Activation Clustering~\cite{chen2018detecting}}               & CodeMark~\cite{sun2023codemark}                      & \multicolumn{1}{c|}{0.45}     & 0.43   & 0.0                                         \\ \cline{2-5} 
 & \name                         & \multicolumn{1}{c|}{0.45}     & 0.46   & 78.4         \\ \hline
\multirow{2}{*}{Spectral Signature~\cite{tran2018spectral}}             & CodeMark~\cite{sun2023codemark}                     & \multicolumn{1}{c|}{0.15}     & 0.14   & 0.8          \\ \cline{2-5}         & \name                          & \multicolumn{1}{c|}{0.15}     & 0.13   & 86.6                                        \\ \hline
\end{tabular}
}
%\vspace{-0.5em}
\label{tab:countermeasure}
\end{table}

\begin{table}[t]
\centering
\caption{The effect of the model trainer's variable renaming strategy with different renaming ratios. We test under three choices of the attacker's oracle code LLM. }
\scalebox{1.1}
{
\begin{tabular}{|l|c|c|c|c|}
\hline
Renaming ratio   & 25\%                       & 50\%                       & 75\%                       & 100\%                       \\ \hline
\revise{Deepseek-Coder-1.3B~\cite{guo2024deepseek}} & 92.07                     & 81.15                     & 77.32                    & 66.34               \\ \hline
StableCoder-3B~\cite{pinnaparaju2024stable}    & 90.68                     & 88.22                    & 78.79                      & 70.52                   \\ \hline
StarCoder-3B~\cite{starcoder}     & \multicolumn{1}{l|}{89.27} & \multicolumn{1}{l|}{85.28} & \multicolumn{1}{l|}{75.60} & \multicolumn{1}{l|}{66.73} \\ \hline
\end{tabular}
}
\label{tab:trainer-remove}
\end{table}

We also evaluate variable renaming as an adaptive attack strategy against \name, where the model trainer (attacker) randomly renames a certain proportion of variables to neutralize the embedded marks. The core idea is to aggressively rename variables in the training data to ``cover'' those renamed by \name. To enhance the attack, the adaptive trainer can leverage an oracle code LLM to select variables for renaming that fall within the same logits rank range targeted by our marking algorithm. Importantly, we assume the attacker does not know which oracle model each user employs, as the training dataset consists of code written by a large number of users, each potentially using a different oracle code LLM.

We evaluate this strategy under renaming ratios of 25\%, 50\%, 75\%, and 100\%, with detection results presented in Table~\ref{tab:trainer-remove}. When the renaming ratio is at or below 75\%, the impact on \name's detection performance remains minimal—across different oracle code LLMs, our system continues to achieve DSR $> 75\%$ under 5\% FDR guarantee. Even when the attacker renames all tokens proposed by the oracle code LLM (i.e., 100\% renaming), \name still successfully identifies over 65\% of the training repositories. This robustness stems from the variation in logits rankings across different oracle models, making it difficult for an attacker using oracle A to accurately obscure the marks generated by oracle B. 

Notably, such aggressive renaming alters code context and harms code quality and readability. To evaluate the utility loss, we trained the \revise{Qwen2.5-Coder-1.5B} model with a 100\% renaming ratio. The model’s Pass@100 score dropped from 15.24\% (no renaming) to 12.78\%, indicating that excessive variable renaming impairs model training. This poses a significant challenge for attackers: without knowledge of the exact oracle used, it is difficult to suppress \name effectively and maintain model performance simultaneously.

\section{Discussion and Limitation} 
%\noindent \textbf{Logits access assumption.} 
%\revise{ Some commercial LLMs (e.g., Claude) do not provide access to logits, which may limit the applicability of our method in certain scenarios.  Nevertheless, since our work focuses on \textit{copyright auditing}, this assumption is reasonable: in copyright litigation (e.g., The New York Times v. OpenAI~\cite{nyt_oai}), courts could request model providers to disclose such runtime information. Developing a logit-free auditing algorithm is an interesting direction for future work.

\noindent \textbf{Auditing existing code LLMs.} 
To the best of our knowledge, membership inference is the only existing data-auditing approach that can be applied to already-trained models. However,  membership inference achieves substantially lower detection accuracy than data marking methods such as \name, due to a lack of controlled injection before training. Enhancing the performance of membership inference remains an important but orthogonal research direction to ours.

\section{Conclusion}
The widespread deployment of code LLMs has raised pressing ethical and legal concerns regarding the unauthorized use of open-source repositories. In this paper, we present \name, a proactive and theoretically grounded data-auditing framework tailored to the code domain. Our framework simultaneously achieves semantic preservation, imperceptibility, data efficiency, and a provable FDR guarantee—four properties that have not been satisfied together by any prior work. By generating multiple semantically equivalent code variants and employing a rank-based hypothesis test over model responses, \name can reliably detect whether a deployed code LLM has been trained on a given repository. Comprehensive experiments across diverse datasets and model architectures demonstrate that \name consistently delivers high detection accuracy, substantially outperforming state-of-the-art data-usage auditing baselines, even when the target repository contains only a small number of code files. Moreover, its imperceptible variable-renaming strategy ensures practical robustness against model trainers attempting to remove the marks.  By enabling code authors to reliably verify the usage of their repositories, \name contributes to building a more transparent and accountable ecosystem for code LLM  development.

\clearpage
%%
%% The next two lines define the bibliography style to be used, and
%% the bibliography file.
\bibliographystyle{IEEEtran}
\bibliography{ref}

\appendix

%\begin{table}[t]
%\centering
% \scalebox{1.15}
%{
%\begin{tabular}{|l|c|c|c|}
%\hline
%MIA method   &  Loss & mink & zlib \\ \hline
%AUC    & 0.5501 & 0.6311 & 0.6288
% \\ \hline
%\end{tabular}
%}
%\caption{The AUC of different MIA methods on CodeParrot Dataset.}
% \vspace{-2em}
%\label{tab:MIA}
%\end{table}

\clearpage
%\section{Additional results}

%\section{Algorithms}
%
%\newpage

%\begin{table}[h]
%\centering
%\caption{The evaluation loss after different epochs on CodeParrot-1.5B model and CodeParrot Dataset under \name.}
%\scalebox{1.05}
%{
%\begin{tabular}{|l|c|c|c|c|}
% \hline
%\hline
%Metric$\downarrow$, Model$\rightarrow$  
%Epoch   &  1 & 2 &3&4 \\ \hline
%Loss    & 0.704 & 0.615 & 0.512 & 0.486
% \\ \hline
%\end{tabular}
%}
 %\vspace{-1em}
%\label{tab:epoch-loss}
%\end{table}

\begin{figure*}[h]

\centering
\subfloat{\includegraphics[width=1\textwidth]{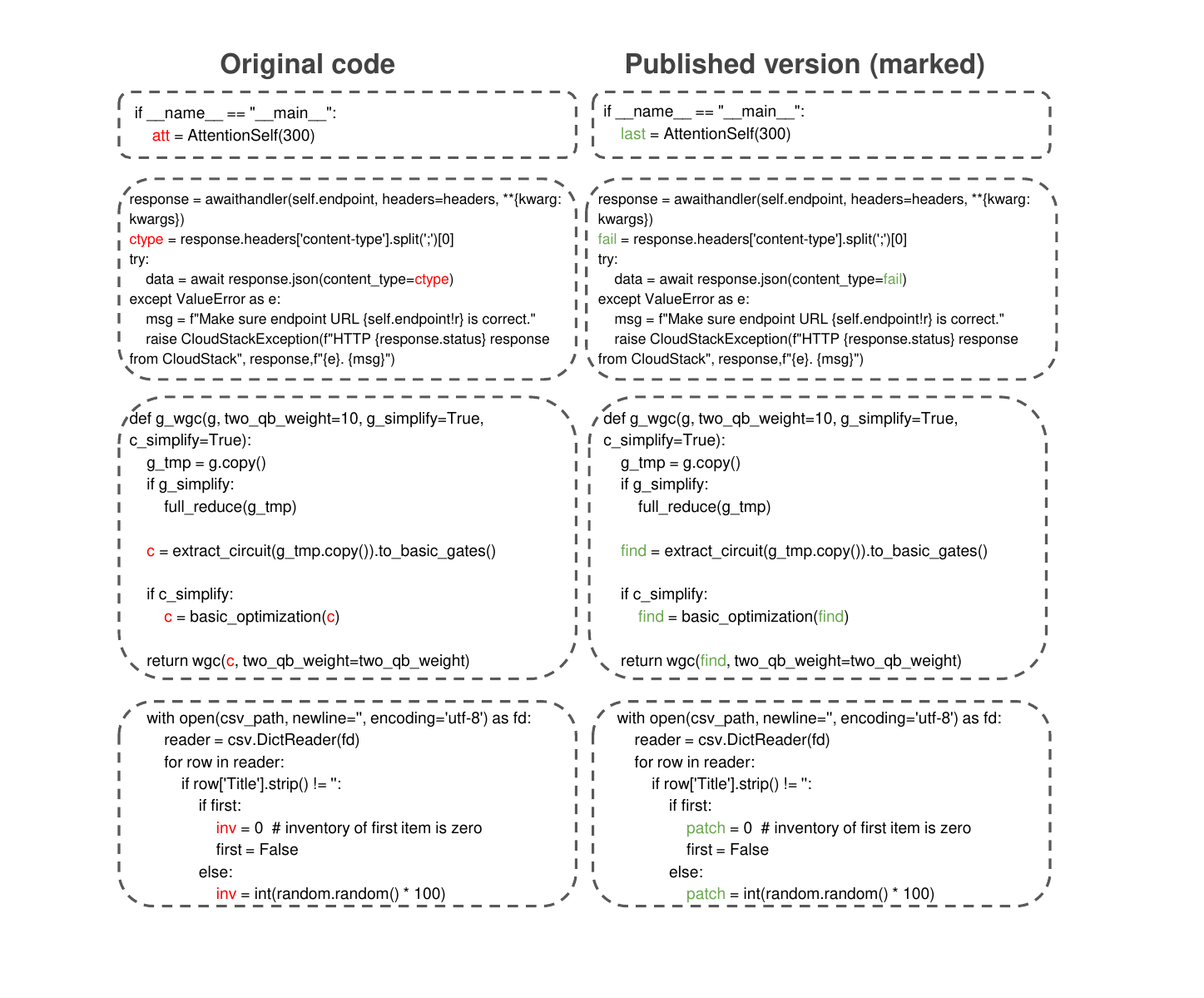}}
\caption{More examples of code marked with our method. }
\label{fig:moreexample}
\end{figure*}

\begin{algorithm}[h]
   \caption{\emph{Marking algorithm for single code file}}
   \label{alg:replacement}
\begin{algorithmic}
   \STATE {\bfseries Input:} A code file $x$; line number of code $L$; mark sparsity $K$; oracle language model $\mathsf{O-LLM}$; ranking threshold $R$;  number of versions $m$
   \\
     \STATE {\bfseries Output:} A marked code file $\hat{x}$, injected mark positions $\delta$, original tokens $V$\\
    \STATE Extract all the local variables by AST parser,  record their positions as set $\vartheta$;\\
    \STATE $\delta=\{\}$;\\
    \FOR{$i=1,2,\cdots,|\vartheta|$}
    \STATE  $v=\mathsf{O-LLM}(x_1,\cdots,x_{\vartheta_i-1})$;\\ %score
   % \STATE $r=$;\\
    \IF {$\mathsf{rk}(v,v[x_{\vartheta_i}]) \ge R$}
    \STATE $\delta.\mathsf{insert}(\vartheta_i)$ \\
    \ENDIF
    \ENDFOR
    \IF {$|\delta| \ge \frac{L}{K}$}
    \STATE $\delta=\{\text{Randomly select } \lfloor\frac{L}{K}\rfloor \text{different elements from } \delta\}$;
    \ENDIF
  \STATE   $\hat{x}=x$\\
  \STATE $V=\{\}$\\
    \FOR{$i=1,2,\cdots,|\delta|$}
    \STATE  $v=\mathsf{O-LLM}(x_1,\cdots,x_{\delta_i-1})$;\\ %score
    \STATE $V.\mathsf{insert}(x_{\delta_i})$\\
    \STATE Construct the similar variable name set:\\
    \STATE $P=\mathsf{rk}(v,v[x_{\delta_i}])$; \\
    $B=\{t|\mathsf{rk}(v,v[t])\in [P-m/2, P+m/2-1]\}$  \\
    \STATE Randomly select a variable name $t$ from $B$ \\
    \STATE Rename variable $\hat{x}[\delta_i]$ by name $t$ via the AST parser at all occurrences of the corresponding variable in the code file $\hat{x}$. \\
    \ENDFOR
    \RETURN $\hat{x},\delta,V$
\end{algorithmic}
\end{algorithm}

\begin{algorithm}[h]
   \caption{\emph{Detection algorithm}}
   \label{alg:detection}
\begin{algorithmic}
   \STATE {\bfseries Input:} Target code LLM for auditing $\mathsf{A-LLM}$; A set of marked code files $F$ in a repository; Repository size $N$; A nested set $\delta$ of injected mark numbers of different files; A nested set $V$ of original names of different files;  Oracle code LLM $\mathsf{O-LLM}$; number of versions $m$; desired false-detection rate bound $p$.
   \\
     \STATE {\bfseries Output:} A binary value $b$, $b=1$ means detected data-usage in the repository, $b=0$ means not detected\\
     \STATE Compute number of injected marks: $n=\sum_{i=1}^{|F|} |\delta_i|$\\
     \STATE Define the generating function of rank sum $S$ under $H_0$:
     \begin{align*}
     G_n(x) &= \left(  \frac{x + x^2 + \dots + x^m}{m}  \right)^n 
     \end{align*}
     \STATE Compute PDF of $S$:
\begin{align*}
\Pr(S = t) &= [x^t] G_n(x),~ t=m, \cdots, n\cdot m
\end{align*}
    \STATE Use binary-search to find rank-sum threshold $T$, s.t.:
    \begin{align*}
T= \arg\max_b \left(\sum_{t=1}^b \Pr(S=t) \right)\le p
\end{align*}
\STATE Rank sum $S=0$\\
\FOR {$i=1,2,\cdots, N$}
\FOR {$j=1,2,\cdots, |\delta_i|$}
 \STATE  $v=\mathsf{O-LLM}(F_{i,1}, \cdots,F_{i,\delta_{i,j}-1})$;\\
 \STATE  $h=\mathsf{A-LLM}(F_{i,1}, \cdots,F_{i,\delta_{i,j}-1})$; \\ %score 
     \STATE Reconstruct the similar variable name set:\\
      $P=\mathsf{rk}(v,v[V_{i,j}])$; \\
     $B=\{t|\mathsf{rk}(v,v[t])\in [P-m/2, P+m/2-1]\}$ \\
     \STATE Compute rank for detection:\\
     \STATE $S+=\mathsf{rk}(\{h_i | i\in B\},h[F_{i,\delta_{i,j}}])$;\\
\ENDFOR
\ENDFOR
\IF {$S\le T$}
    \RETURN $1$
\ELSE 
\RETURN $0$
\ENDIF

%\yuexin{The rank sum $s$/$S$ possibly has conflict with the original code file $s$/$S$ in Algorithm 1.}

\end{algorithmic}
\end{algorithm}

\end{document}